\newcommand{\ignore}[1]{}  
\newcommand{\noun}[1]{\textsc{#1}}
\theoremstyle{plain}
\newtheorem{thm}{\protect\theoremname}
\newtheorem{thm}{\protect\theoremname}[chapter]
\theoremstyle{plain}
\newtheorem{lem}[thm]{\protect\lemmaname}
\theoremstyle{definition}
\theoremstyle{plain}
\newtheorem{prop}[thm]{\protect\propositionname}
\providecommand{\definitionname}{Definition}
\providecommand{\lemmaname}{Lemma}
\providecommand{\propositionname}{Proposition}
\providecommand{\theoremname}{Theorem}
\begin{document}
\title{Trajectories for the Optimal Collection of Information}

\author{
Matthew R. Kirchner\\ 
Department of ECE\\
University of California, Santa Barbara\\
Santa Barbara, CA 93106-9560\\
kirchner@ucsb.edu
\and 
David Grimsman\\
Computer Science Department\\
Brigham Young University\\
Provo, UT\\
grimsman@cs.byu.edu\\
\and
Jo{\~a}o P. Hespanha\\
Department of ECE\\
University of California, Santa Barbara\\
Santa Barbara, CA 93106-9560\\
hespanha@ece.ucsb.edu
\and 
Jason R. Marden\\
Department of ECE\\
University of California, Santa Barbara\\
Santa Barbara, CA 93106-9560\\
jmarden@ece.ucsb.edu
}

\maketitle

\thispagestyle{plain}
\pagestyle{plain}

\maketitle

\thispagestyle{plain}
\pagestyle{plain}

\begin{abstract}
We study a scenario where an aircraft has multiple heterogeneous sensors
collecting measurements to track a target vehicle of unknown location.
The measurements are sampled along the flight path and our goals to
optimize sensor placement to minimize estimation error. We select
as a metric the Fisher Information Matrix (FIM), as ``minimizing''
the inverse of the FIM is required to achieve small estimation error.
We propose to generate the optimal path from the Hamilton\textendash Jacobi
(HJ) partial differential equation (PDE) as it is the necessary and
sufficient condition for optimality. A traditional method of lines
(MOL) approach, based on a spatial grid, lends itself well to the
highly non-linear and non-convex structure of the problem induced
by the FIM matrix. However, the sensor placement problem results in
a state space dimension that renders a naive MOL approach intractable.
We present a new hybrid approach, whereby we decompose the state space
into two parts: a smaller subspace that still uses a grid and takes
advantage of the robustness to non-linearities and non-convexities,
and the remaining state space that can be found efficiently from a
system of ODEs, avoiding formation of a spatial grid. 
\end{abstract}

\tableofcontents

\section{Introduction}

We present a method to optimize vehicle trajectories to gain maximal
information for target tracking problems. The scenario currently being
studied is an aircraft receiving passive information from sensors
rigidly mounted to the airframe. These sensors include, but are not
limited to, infrared or visible spectrum, as well as RF receivers
that measure the frequency shifts from an external transmitter. The
measurements are sampled in order to determine the location of a target
vehicle. The placement of the sensors is determined by the path of
the aircraft, influencing how much information is gained as well as
the overall effectiveness of estimating where the target is located.
By optimizing the trajectory, we can achieve maximum information gain,
and hence the greatest accuracy in localizing the target.

This problem is a generalization of what appeared in \cite{kirchner2020heterogeneous},
where the path of the vehicle was fixed and a subset of measurements
were selected only from along this path. In this context we optimize
a metric of the cumulative Fisher Information Matrix (FIM) of the
aircraft path, which is motivated by its connection to the (Bayesian)
Cram\'{e}r-Rao lower bound \cite{gill1995applications}. The \noun{logdet}
metric is chosen as this gives a D-optimal estimate, essentially corresponding
to minimizing the volume of the error ellipsoid, and additionally
provides favorable numeric properties. It is worth noting that while
the focus of this paper is the \noun{logdet} metric, other metrics
may be considered, provided the metric meets certain conditions that
are outlined in what follows in the paper. Of particular interest
would be the trace of the inverse metric, as that gives the A-optimal
estimate, effectively minimizing the mean-square estimate error. Analysis
of the trace of the inverse metric is outside the scope of this paper
and will be investigated in future work.

\begin{figure}
\begin{centering}
\includegraphics[width=8cm]{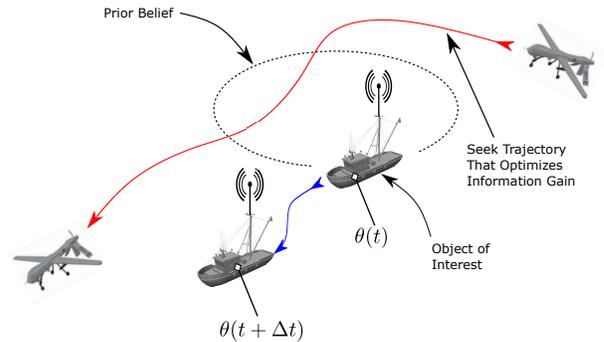}
\par\end{centering}
\caption{An illustration of the target tracking problem. An aircraft collects
measurement for sensors as it flies along a path, attempting to estimate
the location of the ship, denoted here as $\theta$. Modifying the
path of the vehicle can greatly improve the estimation performance.}
\end{figure}
We formulate the problem in such a way that the optimal value function
satisfies a Hamilton-Jacobi (HJ) partial differential equation (PDE),
from which the optimal trajectories immediately follow. Naively, a
solution of the corresponding HJ PDE using a grid-based method would
have many advantages since they handle the non-linear and non-convex
problems that arises in FIM-based optimization. However, the sensor
estimation problem induces a state space dimension that renders typical
grid-based methods \cite{osher2003level} for PDE solutions intractable
due to the exponential dimensional scaling of such methods. Recognition
of this problem is not new, and the phrase ``curse of dimensionality''
was coined decades ago by Richard Bellman \cite{wright1962adaptive}.
This creates a large gap between the rigorous theory of HJ equations
and practical implementation on many problems of interest, especially
vehicle planning and coordination problems.

New research has emerged in an attempt to bridge this technological
gap, including trajectory optimization approaches \cite{darbon2016algorithms,kirchner2017time,kirchner2018primaldual},
machine learning techniques \cite{bertsekas1996neuro,onken2021neural,bansal2021deepreach},
and sub-problem decomposition \cite{chen2018decomposition,kirchner2020hamilton}.
The structure of the sensor placement problem lends itself well to
the later strategy. Unique in this context, though, is that we do
not need to abandon spatial grids entirely, instead forming a hybrid
approach. This leverages the strength of grid-based methods in dealing
with the non-convexities that commonly arise when using the FIM matrix,
but restricts their applications to a small subspace of the problem. 

In what follows we formally introduce the sensor estimation problem
and form its corresponding HJ PDE. We then proceed to show a new hybrid
method of lines (MOL) approach that involves decomposing the state
space. and conclude with simulated results of the optimal trajectories
that result from heterogeneous sensors tracking the location of a
mobile target. Section 2 shows how the information collecting problem
gives rise to nonlinear dynamics with a cascade structure, that the
input only directly affects one first subcomponent of the state, whereas
the optimization criteria only depends on a second subcomponent. Section
3 addresses the optimal control of this type of systems using the
HJ PDE and the classical MOL. Section 4, develops the theory needed
for the new hybrid method of lines, which is applicable to systems
in a cascade form. This type of systems arises naturally in formation
collecting, but the hybrid methods of lines can be applied to the
optimal of more general cascade systems. Section 5 specializes the
hybrid MOL to the information collection. Section 6 includes simulation
results for a particular vehicle model and sensor type.

\section{\label{sec:The-Vehicle-Sensing}The Vehicle Sensing Problem}

We choose as our vehicle a Dubin's car \cite{dubins1957curves} and
denote by $\left(X,Y,\psi\right):=x\in{\mathcal{X}}:={\mathbb{R}}^{2}\times\text{SO}\left(2\right)$
the vehicle state where $X$ and $Y$ are the rectangular positional
coordinates of the vehicle center and $\psi$ is the heading angle.
The dynamics are defined by
\begin{align}
\frac{d}{ds}x\left(s\right) & =f\left(x\left(s\right)\right)+Bu\left(s\right),\,\,\text{a.e.}\,s\in\left[0,t\right]\label{eq:state dynamics}
\end{align}
where
\begin{equation}
f\left(x\right)=\left[\begin{array}{c}
v\cos\psi\\
v\sin\psi\\
0
\end{array}\right],\,B=\left[\begin{array}{c}
0\\
0\\
1
\end{array}\right],\label{eq:f and B def}
\end{equation}
where $u\left(s\right)\in U:=\left[-\omega_{\max},\omega_{\max}\right]$
is the allowable control set of turn rates and $v$ is the fixed forward
speed of the vehicle. The admissible control set is defined as
\begin{equation}
U\left[0,t\right]:=\left\{ u\left(\cdot\right):\left[0,t\right]\rightarrow U\,|\,u\left(\cdot\right)\,\text{is measurable}\right\} .\label{eq:allowable control sequence}
\end{equation}
Our method applied to vehicles that can be expressed in the general
form $\left(\ref{eq:state dynamics}\right)$, which includes the Dubins
vehicle in $\left(\ref{eq:f and B def}\right)$. The Dubins vehicle
with bounded turning rate is particularly interesting because it is
a low-dimensional model that generates trajectories that are easy
to track by an aircraft flying at constant speed and altitude.

The vehicle defined above has a group of rigidly attached sensors
collecting measurements. The measurements, denoted as $y$, are sampled
in order to determine an unknown random variable, $\theta$. The measurements
are assumed to be random variables dependent on $\theta$ with density
function
\[
y\sim\rho\left(y|\theta\right).
\]
Assuming that all measurements $y$ are conditionally independent
given $\theta$, the cumulative Bayesian Fisher Information Matrix
(FIM) associated with the estimation of $\theta$ is of the form
\[
\text{FIM}\left(t,x,u\left(\cdot\right)\right):=Q_{0}+\int_{0}^{t}Q\left(\gamma\left(s;x,u\left(\cdot\right)\right)\right)ds,
\]
where
\begin{equation}
Q\left(x\right):={\mathbb{E}}_{\theta}\left[Q\left(x;\theta\right)\right],\label{eq:Q(x)}
\end{equation}
with
\begin{equation}
Q\left(x;\theta\right):={\mathbb{E}}_{y}\left[\left(\frac{\partial\log\rho\left(y|\theta,x\right)}{\partial\theta}\right)^{\top}\left(\frac{\partial\log\rho\left(y|\theta,x\right)}{\partial\theta}\right)\right],\label{eq:Inner Q(x;theta)}
\end{equation}
and
\[
Q_{0}:={\mathbb{E}}_{\theta}\left[\left(\frac{\partial\log\rho\left(\theta\right)}{\partial\theta}\right)^{\top}\left(\frac{\partial\log\rho\left(\theta\right)}{\partial\theta}\right)\right],
\]
where $\rho\left(\theta\right)$ is the a-priori probability density
function for $\theta$. The formula above assumes a scenario where
the measurement, $y\left(t\right)$, is collected by one sensor or
by multiple independent sensors that generate at the same (constant)
sampling rate. When multiple independent sensors collect measurements
at constant but different sampling rates, the FIM matrix can be factored
for each sensor $i$:
\[
Q\left(t,x,u\left(\cdot\right)\right)=\sum_{i}F^{i}Q^{i}\left(\gamma\left(s;x,u\left(\cdot\right)\right)\right),
\]
where $F^{i}$ is the sampling rate of the $i$-th sensor. The above
matrices are given from \cite{shirazi2019bayesian}, where the expectation
over $y$ in $\left(\ref{eq:Inner Q(x;theta)}\right)$ is given in
closed form for some distributions, see for example \cite[Sec. 5]{kirchner2020heterogeneous}.
While the outer expectation over $\theta$ in $\left(\ref{eq:Q(x)}\right)$
is rarely known in closed form, many approximation schemes can be
employed, for example Monte Carlo sampling or Taylor series expansion.

The placement of the sensors is determined by the path of the aircraft,
influencing how much information is gained as well the overall effectiveness
of estimating $\theta$. Therefore we optimize the trajectory to achieve
maximum information gain, and hence the greatest performance in estimating
$\theta$ from the measurements $y$. For a given initial state $x\in{\mathcal{X}}$
and terminal time $t\in\left[0,\infty\right)$, we define the following
cost functional:
\begin{equation}
J\left(t,x,u\left(\cdot\right)\right):=G\left(\text{CFIM}\left(t,x,u\left(\cdot\right)\right)\right)+\log\det\left(Q_{0}\right),\label{eq:original metric}
\end{equation}
where 
\[
G\left(x,z\right):=-\log\det\left(\text{vec}^{-1}z\right),
\]
We denote by $V\left(t,x\right)$ the value function defined as 
\begin{align}
V\left(t,x\right) & =\underset{u\left(\cdot\right)\in U\left[0,t\right]}{\text{inf}}\,J\left(t,x,u\left(\cdot\right)\right),\label{eq:original problem}
\end{align}
which can be interpreted as the maximal information gain for a family
of trajectory optimization problems parameterized by initial state
$x\in{\mathcal{X}}$ and terminal time $t\in\left[0,\infty\right)$.

The cost functional in $\left(\ref{eq:original problem}\right)$ is
not in a standard form, so we convert the problem into a common standard,
the so-called Mayer form. To do this we augment the state vector with
 $z\in{\mathcal{Z}}:=\text{dom}\left(G\right)$.
Our new state becomes
\[
\chi:=\left(x,z\right)^{\top},
\]
with augmented dynamics
\begin{equation}
\frac{d}{ds}\chi\left(s\right)=\hat{f}\left(\chi\left(s\right),u\left(s\right)\right)=\left[\begin{array}{c}
f\left(x\left(s\right)\right)\\
\ell\left(x\left(s\right)\right)
\end{array}\right]+\left[\begin{array}{c}
B\\
{\bf 0}
\end{array}\right]u\left(s\right),\label{eq:augmented dynamics}
\end{equation}
with
\[
\ell\left(x\left(s\right)\right):=\text{vec}\left(Q\left(x\left(s\right)\right)\right),
\]
where $\text{vec}$ is the vectorize operator that reshapes a matrix
into a column vector and ${\bf 0}$ is a vector of zeros of the same
number of elements as the augmented variable $z$. If we fix the $z$
initial condition such that
\begin{equation}
z=\text{vec}\left(Q_{0}\right),\label{eq:fix z}
\end{equation}
then the cost functional $\left(\ref{eq:original metric}\right)$
can equivalently written as
\begin{equation}
J\left(t,x,u\left(\cdot\right)\right)=J\left(t,\chi,u\left(\cdot\right)\right)=G\left(\text{vec}^{-1}\left(z\right)\right),\label{eq:New J with G(vec)}
\end{equation}
where we denote by $Z=\text{vec}^{-1}\left(z\right)$ the inverse
operator such that
\[
\text{vec}\left(\text{vec}^{-1}\left(z\right)\right)=z.
\]
Hereafter we will denote by $\tilde{G}$ as the function $G$ with
the input reshaped as a function of $z$ with
\begin{equation}
\tilde{G}\left(z\right):=G\left(\text{vec}^{-1}\left(z\right)\right).\label{eq:vec of G}
\end{equation}
Likewise the value function is equivalently written as
\begin{equation}
V\left(t,\chi\right)=\underset{u\left(\cdot\right)\in U\left[0,t\right]}{\text{inf}}\,J\left(t,\chi,u\left(\cdot\right)\right).\label{eq:new value function}
\end{equation}

\section{\label{sec:Decomposition-of-Coupled}Decomposition of Coupled Systems}

The approach we will develop to solve $\left(\ref{eq:new value function}\right)$
is applicable to a more general class of cascade systems that we introduce
in this section, and for which we discuss the use of HJ methods for
optimal control. Denote by $\chi:=\left(x,z\right)^{\top}$ where
$x\in{\mathcal{X}}={\mathbb{R}}^{n}$ and $z\in{\mathcal{Z}}={\mathbb{R}}^{m}$.
The state has coupled dynamics as follows:
\begin{equation}
\begin{cases}
\dot{x}\left(s\right)=f\left(x\left(s\right)\right)+g\left(x\left(s\right)\right)u\left(s\right) & \text{a.e}\,s\in\left[0,t\right]\\
\dot{z}\left(s\right)=\ell\left(x\left(s\right)\right),
\end{cases}\label{eq: Coupled Gen. System}
\end{equation}
with $u\in U$, where $U$ is a closed convex set. We denote by $\left[0,t\right]\ni s\mapsto\gamma\left(s;x_{0},u\left(\cdot\right)\right)\in{\mathbb{R}}^{n}$
the $x$ state trajectory that evolves in time according to $\left(\ref{eq:state dynamics}\right)$
starting from initial state $x_{0}$ at $t=0$. The trajectory $\gamma$
is a solution of $\left(\ref{eq:state dynamics}\right)$ in that it
satisfies $\left(\ref{eq:state dynamics}\right)$ almost everywhere:
\begin{equation}
\begin{cases}
\dot{\gamma}\left(s;x_{0},u\left(\cdot\right)\right)=f\left(\gamma\left(s;x_{0},u\left(\cdot\right)\right)\right)+g\left(\gamma\left(s;x_{0},u\left(\cdot\right)\right)\right)u,\\
\gamma\left(0;x_{0},u\left(\cdot\right)\right)=x_{0}.
\end{cases}\label{eq:dynamic constraints}
\end{equation}
Likewise, we denote by $\left[0,t\right]\ni s\mapsto\xi\left(s;\chi_{0},u\left(\cdot\right)\right)$
the trajectory of the $z$ variable and it satisfies the following
almost everywhere:
\begin{equation}
\begin{cases}
\frac{d}{ds}\xi\left(s;\chi_{0},u\left(\cdot\right)\right)=\ell\left(\gamma\left(s;x_{0},u\left(\cdot\right)\right)\right),\\
\xi\left(0;\chi_{0},u\left(\cdot\right)\right)=z_{0}.
\end{cases}\label{eq:aux variable trajectory}
\end{equation}
Note that the trajectory can be found directly from the expression:
\begin{equation}
\xi\left(s;\chi_{0},u\left(\cdot\right)\right):=z_{0}+\int_{0}^{s}\ell\left(\gamma\left(\tau;x_{0},u\left(\cdot\right)\right)\right)d\tau.\label{eq:running cost as trajectory-1}
\end{equation}
Denote $G:{\mathbb{R}}^{m}\rightarrow{\mathbb{R}}$ as the terminal cost
function such that the mapping
\[
{\mathcal{Z}}\ni z\mapsto G\left(z\right)\in{\mathbb{R}},
\]
We define the cost functional
\[
J\left(t,\chi,u\left(\cdot\right)\right):=G\left(\xi\left(t;\chi,u\left(\cdot\right)\right)\right),
\]
and the associated value function as
\[
V\left(t,\chi\right):=\underset{u\left(\cdot\right)\in U\left[0,t\right]}{\text{inf}}J\left(t,\chi,u\left(\cdot\right)\right),
\]
where $U\left[0,t\right]$ is defined as in $\left(\ref{eq:allowable control sequence}\right)$.

We denote by
\[
\hat{f}\left(\chi,u\right):=\left[\begin{array}{c}
f\left(x\right)+g\left(x\right)u\\
\ell\left(x\right)
\end{array}\right],
\]
the joint vector field in $\left(\ref{eq: Coupled Gen. System}\right)$.
We assume that $\hat{f}$ , $U$, and $G$ satisfy the following regularity
assumptions:
\begin{description}
\item [{(F1)}] $\left(U,d\right)$ is a separable metric space.
\item [{(F2)}] The maps $\hat{f}:{\mathcal{X}}\times U\rightarrow{\mathbb{R}}^{n+m}$
and $G:{\mathcal{Z}}\rightarrow{\mathbb{R}}$ are measurable, and there
exists a constant $L>0$ and a modulus of continuity $\omega:\left[0,\infty\right)\rightarrow\left[0,\infty\right)$
such that for $\varphi\left(\chi,u\right)=\hat{f}\left(\chi,u\right),G\left(z\right)$,
we have for all $\chi,\chi'\in{\mathcal{X}}\times{\mathcal{Z}}$, and
$u,u'\in U$ 
\[
\left|\varphi\left(\chi,u\right)-\varphi\left(\chi',u'\right)\right|\leq L\left\Vert \chi-\chi'\right\Vert +\omega\left(d\left(u,u'\right)\right),
\]
and
\[
\left|\varphi\left({\bf 0},u\right)\right|\leq L.
\]
\item [{(F3)}] The maps $\hat{f}$, and $G$ are $C^{1}$ in $\chi$, and
there exists a modulus of continuity $\omega:\left[0,\infty\right)\rightarrow\left[0,\infty\right)$
such that for $\varphi\left(\chi,u\right)=\hat{f}\left(\chi,u\right),G\left(z\right)$,
we have for all $\chi,\chi'\in\mathcal{X}\times\mathcal{Z}$, and
$u,u'\in U$
\[
\left|\varphi_{\chi}\left(\chi,u\right)-\varphi_{\chi}\left(\chi',u'\right)\right|\leq\omega\left(\left\Vert \chi-\chi'\right\Vert +d\left(u,u'\right)\right).
\]
\end{description}

\subsection{Hamilton\textendash Jacobi Formulation}

Under a set of mild Lipschitz continuity assumptions, there exists
a unique value function $\left(\ref{eq:new value function}\right)$
that satisfies the following Hamilton\textendash Jacobi (HJ) equation
\cite{evans10} with $V\left(t,\chi\right)$ being the viscosity solution
of the partial differential equation (PDE) for $s\in\left[0,t\right]$
\begin{align}
V_{s}\left(s,\chi\right)+{\mathcal{H}}\left(\chi,V_{\chi}\left(s,\chi\right)\right) & =0,\label{eq:HJB Equation-1}\\
V\left(0,\chi\right) & =G\left(z\right),\nonumber 
\end{align}
where $\sigma:=\left(p,\lambda\right)^{\top}$ and
\begin{equation}
{\mathcal{H}}\left(\chi,\sigma\right):=\underset{u\in U}{\min}H\left(\chi,u,\sigma\right),\label{eq:Optimal Hamiltonian-1}
\end{equation}
with the Hamiltonian, $H$, defined by
\begin{align*}
H\left(\chi,u,\sigma\right) & =\left\langle \left[\begin{array}{c}
f\left(x\right)+g\left(x\right)u\\
\ell\left(x\right)
\end{array}\right],\left[\begin{array}{c}
p\\
\lambda
\end{array}\right]\right\rangle \\
 & =\left\langle f\left(x\right),p\right\rangle +\left\langle g\left(x\right)u,p\right\rangle +\left\langle \ell\left(x\right),\lambda\right\rangle .
\end{align*}
In the case where the set $U$ is bounded by a norm, i.e.
\begin{equation}
U=\left\{ u\in{\mathbb{R}}^{n_{u}}|\left\Vert u\right\Vert \leq c\right\} ,\label{eq:general control set}
\end{equation}
for some $c$, then $\left(\ref{eq:Optimal Hamiltonian-1}\right)$
is given in closed form by
\begin{equation}
{\mathcal{H}}\left(\chi,\rho\right)=\left\langle f\left(x\right),p\right\rangle +\left\Vert g\left(x\right)^{\top}p\right\Vert _{*}+\left\langle \ell\left(x\right),\lambda\right\rangle ,\label{eq:optimal Hamiltonian-1}
\end{equation}
where $\left\Vert \left(\cdot\right)\right\Vert _{*}$ is the dual
norm to $\left\Vert \left(\cdot\right)\right\Vert $ in $\left(\ref{eq:general control set}\right)$.
We denote by $\pi$ the control that optimizes the Hamiltonian and
is given by
\[
\pi\left(s,\chi\right):=\underset{u\in U}{\arg\,\min}H\left(\chi,u,V_{\chi}\left(s,\chi\right)\right).
\]
We note here that under mild assumptions, the viscosity solution of
$\left(\ref{eq:HJB Equation-1}\right)$ is Lipschitz continuous in
both $s$ and $\chi$ \cite[Theorem 2.5, p. 165]{yong1999stochastic}.
This implies by Rademacher's theorem \cite[Theorem 3.1.6, p. 216]{federer2014geometric}
the value function is differentiable almost everywhere. For what follows,
we assume that the value function has continuous first and second
derivatives. The points where this fails to be true only exists on
a set of measure zero, and any practical implementation of the method
presented will only evaluate points where the first and second derivatives
exist. A characterization of the differentiability of the value function
is outside the scope of this paper and a full rigorous treatment will
appear in forthcoming work. 

\subsection{Necessary Conditions of the Optimal Trajectories}

Fix $x\in{\mathcal{X}}$ and $z\in{\mathcal{Z}}$ as initial conditions
and fix the terminal time $t$. Denote by $\bar{\gamma}\left(s\right)$
and $\bar{\xi}\left(s\right)$ as the optimal state trajectories such
that
\[
\bar{\gamma}\left(s\right):=\bar{\gamma}\left(s;\chi\right)=\gamma\left(s;x,\bar{u}\left(\cdot;\chi\right)\right),
\]
and
\[
\bar{\xi}\left(s\right):=\bar{\xi}\left(s;\chi\right)=\xi\left(s;x,z,\bar{u}\left(\cdot;\chi\right)\right),
\]
such that $\bar{u}$ optimizes $\left(\ref{eq:new value function}\right)$.
By Pontryagin's theorem \cite{pontryagin2018mathematical} there exists
adjoint trajectories $p\left(s\right):=p\left(s;\chi\right)$ and
$\lambda\left(s\right):=\lambda\left(s;\chi\right)$ such that the
function 
\begin{equation}
\left[0,t\right]\ni s\mapsto\left(\bar{\gamma}\left(s\right),\bar{\xi}\left(s\right),p\left(s\right),\lambda\left(s\right)\right)\label{eq: Optimal trajectories}
\end{equation}
is a solution of the characteristic system
\begin{equation}
\begin{cases}
\dot{\bar{\gamma}}\left(s\right)={\mathcal{H}}_{p}\left(\bar{\gamma}\left(s\right),\bar{\xi}\left(s\right),p\left(s\right),\lambda\left(s\right)\right),\\
\dot{\bar{\xi}}\left(s\right)={\mathcal{H}}_{\lambda}\left(\bar{\gamma}\left(s\right),\bar{\xi}\left(s\right),p\left(s\right),\lambda\left(s\right)\right),\\
\dot{p}\left(s\right)=-{\mathcal{H}}_{x}\left(\bar{\gamma}\left(s\right),\bar{\xi}\left(s\right),p\left(s\right),\lambda\left(s\right)\right),\\
\dot{\lambda}\left(s\right)=-{\mathcal{H}}_{z}\left(\bar{\gamma}\left(s\right),\bar{\xi}\left(s\right),p\left(s\right),\lambda\left(s\right)\right),
\end{cases}\label{eq:characteristic system}
\end{equation}
almost everywhere $s\in\left[0,t\right]$ with boundary conditions
\[
p\left(t\right)={\bf 0},\,\,\lambda\left(t\right)=G_{z}\left(\bar{\xi}\left(t\right)\right).
\]

\subsection{Numerical Approximations Viscosity Solutions to First-Order Hyperbolic
PDEs}

Traditional methods for computing the viscosity solution to $\left(\ref{eq:HJB Equation-1}\right)$
rely on constructing a discrete grid of points. This is typically
chosen as a Cartesian grid, but many other grid types exist. The value
function is found using a \emph{method of lines }(MOL)\emph{ }approach
by the solving the following family of ODEs, pointwise at each grid
point $\chi^{k}=\left(x^{k},z^{k}\right)\in{\mathcal{S}}:={\mathcal{X}}\times{\mathcal{Z}}$:
\begin{equation}
\begin{cases}
\dot{\phi}\left(s,\chi^{k}\right)=-{\mathcal{H}}\left(\chi^{k},D_{\chi}\phi\left(s,\chi^{k}\right)\right), & s\in\left[0,t\right]\\
\phi\left(0,\chi^{k}\right)=G\left(z^{k}\right),
\end{cases}\label{eq:method of lines family of odes}
\end{equation}
 where $\phi\left(s,\chi^{k}\right)$ should be viewed as an approximation
to the value function $V\left(s,\chi^{k}\right)$ in $\left(\ref{eq:HJB Equation-1}\right)$
and
\[
D_{\chi}\phi\left(s,\chi^{k}\right)\approx\phi_{\chi}\left(s,\chi^{k}\right)
\]
is obtained by a finite difference scheme used to approximate the
gradient of $\phi$ at grid point $k$. Care must be taken when evaluating
finite differences of possibly non-smooth functions and the family
of \emph{Essentially Non-Oscillatory} (ENO) methods were developed
to address this issue \cite{jiang2000weighted}. The advantage of
the method of lines is that we can compute $\left(\ref{eq:method of lines family of odes}\right)$
independently at each grid point with $\phi\left(t,\chi^{k}\right)\approx V\left(t,\chi^{k}\right)$.
Under certain conditions, for example the Lax-Richtmyer equivalence
theorem \cite{lax1956survey}, 
\[
\Delta s\rightarrow0,\,\Delta\chi\rightarrow0\implies\phi\left(t,\chi^{k}\right)\rightarrow V\left(t,\chi^{k}\right)
\]
when the scheme is both consistent, i.e. the error between $\phi\left(t,\chi^{k}\right)$
and $V\left(t,\chi^{k}\right)$ tends to zero, and stable. In this
case, stability is enforced when the time step, $\Delta s$, satisfies
the Courant-Friedrichs-Lewy (CFL) condition \cite{courant1967partial}.
When the HJ equation is a non-linear PDE, then additionally a Lax-Friedrichs
approximation \cite{crandall1984two,osher1991high} is needed to ensure
stability. In the Lax-Friedrichs method the Hamiltonian in $\left(\ref{eq:method of lines family of odes}\right)$
is replaced by
\begin{align*}
\hat{\mathcal{H}}\left(\chi,\sigma^{+},\sigma^{-}\right):= & \mathcal{H}\left(\chi,\frac{\sigma^{+}+\sigma^{-}}{2}\right)\\
 & -\nu\left(\chi\right)^{\top}\left(\frac{\sigma^{+}+\sigma^{-}}{2}\right),
\end{align*}
where inputs $D_{\chi}^{+}\phi\left(s,\chi^{k}\right)\rightarrow\sigma^{+}$
and $D_{\chi}^{-}\phi\left(s,\chi^{k}\right)\rightarrow\sigma^{-}$
are the right and left side bias finite differencing approximations
to the gradient, respectively. The term $\nu\left(\chi\right)$ is
the artificial dissipation and depends on $H_{\sigma}\left(\chi,\sigma\right)$,
the gradient of the Hamiltonian with respect to the adjoint variable.
The MOL approach in $\left(\ref{eq:method of lines family of odes}\right)$
becomes
\begin{equation}
\begin{cases}
\dot{\phi}\left(s,\chi^{k}\right)=-\hat{\mathcal{H}}\left(\chi^{k},D_{\chi}^{+}\phi\left(s,\chi^{k}\right),D_{\chi}^{-}\phi\left(s,\chi^{k}\right)\right),\\
\phi\left(0,\chi^{k}\right)=G\left(z^{k}\right),
\end{cases}\label{eq:method of lines family of odes with Lax-Friedrichs}
\end{equation}

In general, no closed form solution exists to $\left(\ref{eq:method of lines family of odes with Lax-Friedrichs}\right)$
and therefore an explicit Runge-Kutta scheme is employed. If the first
order Euler method is used to solve $\left(\ref{eq:method of lines family of odes with Lax-Friedrichs}\right)$,
then we have the following time-marching scheme with iteration for
$s\in\left[0,t\right]$:
\begin{equation}
\begin{cases}
\phi\left(s+\Delta s,\chi^{k}\right)=\phi\left(s,\chi^{k}\right)\\
\,\,\,\,\,\,\,\,\,\,\,\,\,\,\,\,\,\,\,\,\,\,\,\,\,\,\,\,\,\,\,\,\,\,\,\,\,\,\,-\Delta s\hat{\mathcal{H}}\left(\chi^{k},D_{\chi}^{+}\phi\left(s,\chi^{k}\right),D_{\chi}^{-}\phi\left(s,\chi^{k}\right)\right),\\
\phi\left(0,\chi^{k}\right)=G\left(z^{k}\right).
\end{cases}\label{eq: method of lines, approx}
\end{equation}
 The reader is encouraged to read \cite{osher2003level} for a comprehensive
review on numeric numeric methods to solving first-order hyperbolic
HJ PDEs.

\section{\label{sec:HJB-Decomposition}HJB Decomposition}

We are especially interested in problems for which the $x$-component
of the state in $\left(\ref{eq: Coupled Gen. System}\right)$ has
a relatively small dimension, but $z$-component does not. This is
common in the vehicle sensing problem discussed in Section \ref{sec:The-Vehicle-Sensing},
because the dimension of $z$ scales with the square of the number
of parameters to be estimated and therefore, even for simple vehicle
dynamics and a relatively small number of parameters, the dimension
of the state $\chi$ is too large to apply $\left(\ref{eq: method of lines, approx}\right)$.
To overcome this challenge, we present an hybrid method of lines that
uses a grid over $x$, but no grid over $z$.

A key challenge to creating such a method is to find a closed-form
expression for the gradient of the value function with respect to
$z$, so as to avoid finite differencing schemes in $z$. Taking advantage
of the specific structure of the problem, we show that we can use
a grid over the state variable $x$ to compute $D_{x}\phi\left(s,\chi^{k}\right)\approx\phi_{x}\left(s,\chi^{k}\right)$
with finite differences, but avoid a grid over the state variable
$z$ by solving a family of ODEs to compute $D_{z}\phi\left(s,\chi^{k}\right)$.
This is supported by the following theorem.
\begin{thm}
\label{thm: ODE for grad of z}Suppose the value function $V\left(s,\chi\right)$
is twice differentiable at $\left(s,\chi\right)\in\left[0,\infty\right)\times{\mathcal{S}}$.
Then at any point $\chi$, the gradient of the value function with
respect to $z$ can be found using the following ODE:
\begin{equation}
\begin{cases}
\dot{V}_{z}\left(s,\chi\right)=-\frac{\partial}{\partial z}\left\langle G_{z}\left(\bar{\xi}\left(s\right)\right),\ell\left(x\right)\right\rangle \\
\,\,\,\,\,\,\,\,\,\,\,\,\,\,\,\,\,\,\,\,\,\,\,\,\,\,\,\,\,-R_{x}\left(s,\chi,\pi\left(s,\chi\right),f\left(x\right),g\left(x\right)\right),\\
V_{z}\left(0,\chi\right)=G_{z}\left(z\right),
\end{cases}\label{eq:1st term in z ode}
\end{equation}
where
\begin{align}
R_{x}\left(s,\chi,u,\alpha,\beta\right):= & \frac{\partial}{\partial x}\Big\{\left\langle G_{z}\left(\bar{\xi}\left(s\right)\right),\alpha\right\rangle \label{eq:R from decomp}\\
 & +\left\langle G_{z}\left(\bar{\xi}\left(s\right)\right),\beta u\right\rangle \Big\}.
\end{align}
\end{thm}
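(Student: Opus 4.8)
The plan is to obtain the stated ODE by differentiating the Hamilton--Jacobi PDE (\ref{eq:HJB Equation-1}) in $z$ and then rewriting the result with the help of one structural identity that is special to the cascade system (\ref{eq: Coupled Gen. System}).

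First I would establish the identity $V_z(s,\chi)=G_z(\bar{\xi}(s))$. The key observation is that in (\ref{eq:running cost as trajectory-1}) the variable $z$ enters the $z$-trajectory $\xi$ only as an additive initial condition, so $\partial\xi(s;\chi,u(\cdot))/\partial z$ is the identity for every admissible $u(\cdot)$. Writing $V(s,\chi)=G(\bar{\xi}(s))$ with $\bar{\xi}(s)$ the terminal value of the optimal $z$-trajectory and differentiating in $z$, the contribution from the $z$-dependence of the optimizer $\bar{u}$ vanishes by first-order optimality (an envelope argument for the outer infimum over $u(\cdot)\in U[0,t]$), leaving $V_z(s,\chi)=G_z(\bar{\xi}(s))$. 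Evaluated at $s=0$ we have $\bar{\xi}(0)=z$, which furnishes the initial condition $V_z(0,\chi)=G_z(z)$ in (\ref{eq:1st term in z ode}).

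Next I would differentiate (\ref{eq:HJB Equation-1}) in $z$ at fixed $\chi$. Writing $\sigma=(p,\lambda)=(V_x,V_z)$ and invoking the envelope (Danskin) theorem for the minimized Hamiltonian $\mathcal{H}(\chi,\sigma)=\min_u H(\chi,u,\sigma)$, the partials of $\mathcal{H}$ coincide with those of $H$ evaluated at the minimizer $\pi$; since $f,g,\ell$ depend on $x$ only and the minimization touches just the $x$-block, this gives $\mathcal{H}_z=0$, $\mathcal{H}_p=f(x)+g(x)\pi$, and $\mathcal{H}_\lambda=\ell(x)$. Differentiating $V_s+\mathcal{H}(\chi,V_\chi)=0$ and interchanging mixed partials (licensed by the assumed twice-differentiability of $V$) then yields $\dot{V}_z=V_{sz}=-\langle f(x)+g(x)\pi,V_{xz}\rangle-\langle\ell(x),V_{zz}\rangle$. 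The essential payoff of the envelope step is that it absorbs the $\partial\pi/\partial z$ terms, so no derivative of the generally non-smooth (bang-bang) control appears.

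Finally I would cast this ODE into the claimed form using $V_z=G_z(\bar{\xi}(s))$. Since $\ell(x)$ is independent of $z$, $\langle\ell(x),V_{zz}\rangle=\frac{\partial}{\partial z}\langle\ell(x),V_z\rangle=\frac{\partial}{\partial z}\langle G_z(\bar{\xi}(s)),\ell(x)\rangle$, matching the first term of (\ref{eq:1st term in z ode}). For the second term, reading the definition (\ref{eq:R from decomp}) of $R_x$ with the placeholder arguments $\alpha=f(x)$, $\beta=g(x)$, $u=\pi$ held fixed under $\partial/\partial x$, one gets $R_x=\langle\frac{\partial}{\partial x}G_z(\bar{\xi}(s)),f(x)+g(x)\pi\rangle=\langle V_{xz},f(x)+g(x)\pi\rangle$, which is exactly the remaining term. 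This identification is also what makes the scheme implementable: $\partial_x G_z(\bar{\xi}(s))=V_{xz}$ is obtained from finite differences of $V_z$ over the $x$-grid, whereas the fixed vectors $f$ and $g\pi$ need no differencing. I expect the main obstacle to be the rigorous justification of the two envelope-theorem steps---differentiating through the outer infimum over controls and through the pointwise $\min_u$---since the optimal control is typically bang-bang and only measurable; under the standing assumption that $V$ is twice differentiable at the point in question (so that $V_x,V_z$ are the unique adjoints and the mixed partials commute) this reduces to a standard Danskin-type argument, but handling the measure-zero set where $V$ fails to be smooth is the delicate part.
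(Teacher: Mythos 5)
Your proposal is correct, and its main line is the same as the paper's: differentiate the HJ PDE $V_{s}+\mathcal{H}\left(\chi,V_{\chi}\right)=0$ with respect to $z$, use a Danskin/envelope step to pass the $z$-derivative through $\min_{u\in U}\left\langle V_{x},g\left(x\right)u\right\rangle$ so that no $\partial\pi/\partial z$ terms survive (the paper cites \cite[Theorem 4.13]{bonnans2000perturbation} for exactly this), exchange the mixed partials $V_{zx}=V_{xz}$ under the standing twice-differentiability assumption, and then substitute the identity $V_{z}\left(s,\chi\right)=G_{z}\left(\bar{\xi}\left(s\right)\right)$ to obtain the two terms in $\left(\ref{eq:1st term in z ode}\right)$. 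The one place where you genuinely diverge is in how you establish that identity. The paper isolates it as Lemma \ref{lem: First gradient Vz} and proves it through Pontryagin's characteristic system: the adjoint $\lambda$ satisfies $\dot{\lambda}=-\mathcal{H}_{z}=0$ because the Hamiltonian has no direct $z$-dependence, so $\lambda$ is constant along the optimal trajectory and $V_{z}\left(t,\chi\right)=\lambda\left(0\right)=\lambda\left(t\right)=G_{z}\left(\bar{\xi}\left(t\right)\right)$, invoking \cite[Theorem 3.4, p. 235]{yong1999stochastic} for the adjoint--value-function connection. You instead differentiate $V\left(s,\chi\right)=G\left(\bar{\xi}\left(s\right)\right)$ directly in $z$, using the additive structure of $\left(\ref{eq:running cost as trajectory-1}\right)$ to get $\partial\xi/\partial z=I$ (this is precisely Proposition \ref{prop:gradient in z with intial cond. is I}, which the paper uses only later, in Lemma \ref{lem:dz trace as a function of gradient}) and an envelope argument over the outer infimum to discard the sensitivity of $\bar{u}$. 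Your route is more elementary and self-contained--it needs no adjoint machinery--but it shifts the delicate step to differentiating through an infimum over the infinite-dimensional control set, which you correctly flag as the weak point; the paper's adjoint route packages that same difficulty into a single citation. Both arguments carry equivalent content and the same measure-zero smoothness caveat, which the paper also acknowledges and defers.
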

The proof of Theorem \ref{thm: ODE for grad of z} will need the following
technical lemma.
\begin{lem}
\label{lem: First gradient Vz}Suppose that the gradient $V_{z}\left(t,\chi\right)$
exists at $\left(t,\chi\right)\in\left[0,\infty\right)\times{\mathcal{S}}.$
Then the gradient of the value function with respect to the augmented
variable is given by
\[
V_{z}\left(t,\chi\right)=G_{z}\left(\bar{\xi}\left(t;\chi\right)\right).
\]
\end{lem}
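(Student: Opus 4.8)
The plan is to exploit the cascade structure directly: the $z$-dynamics $\dot z=\ell(x)$ depend only on $x$ and never on $z$ itself, so the optimal $x$-trajectory is unaffected by the initial value of $z$, and the map $z\mapsto\xi(t;(x,z),u(\cdot))$ is a pure translation. This turns the lemma into an envelope-theorem (Danskin-type) statement and lets me avoid differentiating the optimal control with respect to $z$ altogether.

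First I would record the translation identity. From the integral representation $\left(\ref{eq:running cost as trajectory-1}\right)$, for any admissible $u(\cdot)$ and any $z,z'\in{\mathcal Z}={\mathbb R}^m$,
\[
\xi(t;(x,z'),u(\cdot))=z'+\int_0^t\ell(\gamma(\tau;x,u(\cdot)))\,d\tau=\xi(t;(x,z),u(\cdot))+(z'-z),
\]
since $\gamma$ depends only on $x$ and $u(\cdot)$. In particular, holding the optimal control $\bar u(\cdot):=\bar u(\cdot;\chi)$ fixed gives $\xi(t;(x,z'),\bar u(\cdot))=\bar\xi(t)+(z'-z)$.

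Next I would introduce the comparison function
\[
\beta(z'):=G(\xi(t;(x,z'),\bar u(\cdot)))=G(\bar\xi(t)+(z'-z)).
\]
Because $\bar u(\cdot)$ is admissible for every initial condition $(x,z')$ (the set $U[0,t]$ does not depend on the initial state), it is a feasible but generally suboptimal choice, so $\beta(z')\ge V(t,(x,z'))$ for all $z'$, with equality at $z'=z$ by definition of $\bar u$. Hence $z'\mapsto\beta(z')-V(t,(x,z'))$ is nonnegative and vanishes at the interior point $z'=z$, i.e.\ it attains a minimum there. Since $G$ is $C^1$ by (F2) and (F3), $\beta$ is differentiable with $\beta_z(z')=G_z(\bar\xi(t)+(z'-z))$, and $V_z(t,\chi)$ exists by hypothesis. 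The first-order optimality condition at $z'=z$ then forces $\beta_z(z)=V_z(t,\chi)$, and evaluating the left-hand side yields $\beta_z(z)=G_z(\bar\xi(t))$, which is the claim.

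The only delicate point is the existence of a minimizing control $\bar u(\cdot)$ attaining the infimum, which is what makes the comparison function well defined; this is already granted in the preceding subsection, where the optimal trajectories $\bar\gamma,\bar\xi$ and optimal control $\bar u$ are assumed to exist via Pontryagin's principle. I expect the main subtlety to lie here rather than in the differentiation: one must be sure that a \emph{single} fixed $\bar u(\cdot)$ may be re-used across the perturbed initial conditions $z'$, which is exactly guaranteed by the admissible-set independence from the initial state together with the translation identity above. Given these, the envelope argument closes without requiring any regularity of the map $z'\mapsto\bar u(\cdot;(x,z'))$.
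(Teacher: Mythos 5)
Your proof is correct, but it takes a genuinely different route from the paper's. The paper argues through the Pontryagin characteristic system: it identifies $V_{z}\left(t,\chi\right)$ with the initial value $\lambda\left(0\right)$ of the adjoint trajectory (citing the adjoint--gradient relation in Yong--Zhou), imposes the boundary condition $\lambda\left(t\right)=G_{z}\left(\bar{\xi}\left(t\right)\right)$, and then observes that ${\mathcal{H}}_{z}\equiv0$ because the Hamiltonian does not depend on $z$, so $\lambda\left(\cdot\right)$ is constant along the optimal trajectory. You instead exploit the cascade structure directly: since $\gamma$ does not depend on the initial $z$ and $U\left[0,t\right]$ is independent of the initial state, the map $z'\mapsto\xi\left(t;\left(x,z'\right),u\left(\cdot\right)\right)$ is a pure translation for every fixed control, so the fixed optimal control $\bar{u}\left(\cdot;\chi\right)$ furnishes an upper envelope $\beta\left(z'\right)=G\left(\bar{\xi}\left(t\right)+z'-z\right)\geq V\left(t,\left(x,z'\right)\right)$ touching the value function at $z'=z$; the first-order condition at that interior minimum of $\beta-V$ yields $V_{z}\left(t,\chi\right)=\beta_{z}\left(z\right)=G_{z}\left(\bar{\xi}\left(t\right)\right)$. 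Your envelope argument is more elementary and in one respect sharper: it uses only the lemma's stated hypothesis (existence of $V_{z}$ at the point), the $C^{1}$ regularity of $G$ from (F3), and the existence of an optimal control, whereas the paper's proof additionally leans on the sensitivity relation $V_{\chi}=\lambda$, which carries its own regularity requirements, and on differentiability of the value function along the whole characteristic. It also makes transparent that the result is a consequence purely of the translation structure in $z$, with no Hamiltonian machinery needed. What the paper's route buys in exchange is consistency with the characteristics framework used for Theorem \ref{thm: ODE for grad of z} and the stronger byproduct that $\lambda\left(s\right)$ is constant in $s$ along optimal trajectories, a fact that feeds naturally into the hybrid MOL construction. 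You correctly flag the one delicate point in your argument --- that a single minimizing $\bar{u}\left(\cdot\right)$ must exist and be reusable across the perturbed initial conditions $z'$ --- and both are indeed guaranteed here: existence is implicit in the lemma's statement (which references $\bar{\xi}\left(t;\chi\right)$), and reusability follows from the admissible set $\left(\ref{eq:allowable control sequence}\right)$ being independent of the initial state.
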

\begin{proof}
Recall from $\left(\ref{eq:running cost as trajectory-1}\right)$
and applying the optimal control sequence,
\[
\bar{\xi}\left(s\right)=z+\int_{0}^{s}\ell\left(\bar{\gamma}\left(\tau\right)\right)d\tau.
\]
Therefore
\begin{equation}
G_{z}\left(z+\int_{0}^{t}\ell\left(\bar{\gamma}\left(\tau\right)\right)d\tau\right)=G_{z}\left(\bar{\xi}\left(t\right)\right):=\lambda\left(t\right).\label{eq:boudary of lambda grad}
\end{equation}
Recognize that $\left(\ref{eq:boudary of lambda grad}\right)$ is
the boundary condition of the characteristic system $\left(\ref{eq:characteristic system}\right)$,
and that
\begin{align*}
V_{z}\left(t,\chi\right) & =\lambda\left(0\right)\\
 & =G_{z}\left(\bar{\xi}\left(t\right)\right)-\int_{t}^{0}{\mathcal{H}}_{z}\left(\bar{\gamma}\left(s\right),\bar{\xi}\left(s\right),p\left(s\right),\lambda\left(s\right)\right)ds.
\end{align*}
Where the first line above uses the connection between the adjoint
variable, $\lambda$, and the value function \cite[Theorem 3.4, p. 235]{yong1999stochastic}.
Observing that the Hamiltonian $\left(\ref{eq:optimal Hamiltonian-1}\right)$
does not depend on the argument $z$, then it follows that 
\[
{\mathcal{H}}_{z}\left(\bar{\gamma}\left(s\right),\bar{\xi}\left(s\right),p\left(s\right),\lambda\left(s\right)\right)=0,\,s\in\left[0,t\right],
\]
which leads to
\[
V_{z}\left(t,\chi\right)=G_{z}\left(\bar{\xi}\left(t\right)\right).
\]
\end{proof}
We now proceed to the proof of Theorem \ref{thm: ODE for grad of z}.
\begin{proof}
Fix $x,z$ and noting the original HJB equation $\left(\ref{eq:HJB Equation-1}\right)$:
\begin{align*}
\dot{V}_{z}\left(s,\chi\right) & =\frac{\partial}{\partial s}\left\{ V_{z}\left(s,\chi\right)\right\} \\
 & =\frac{\partial}{\partial z}\left\{ V_{s}\left(s,\chi\right)\right\} \\
 & =\frac{\partial}{\partial z}\left\{ -{\mathcal{H}}\left(\chi,V_{x}\left(s,\chi\right),V_{z}\left(s,\chi\right)\right)\right\} .
\end{align*}
From the definition of the Hamiltonian
\begin{align*}
\dot{V}_{z}\left(s,\chi\right)=\frac{\partial}{\partial z}\Bigg\{ & -\left\langle V_{z}\left(s,\chi\right),\ell\left(x\right)\right\rangle -\left\langle V_{x}\left(s,\chi\right),f\left(x\right)\right\rangle \\
 & -\underset{u\in U}{\min}\left\langle V_{x}\left(s,\chi\right),g\left(x\right)u\right\rangle \Bigg\}.
\end{align*}
Fix time $s\in\left[0,t\right]$, and define the function
\begin{align*}
\varphi^{s}\left(\chi,u\right): & =\underset{u\in U}{\min}\,F^{s}\left(\chi,u\right),
\end{align*}
where
\[
F^{s}\left(\chi,u\right):=\left\langle V_{x}\left(s,\chi\right),g\left(x\right)u\right\rangle ,
\]
and recall that 
\[
\pi\left(s,\chi\right):=\underset{u\in U}{\arg\,\min}\left\langle V_{x}\left(s,\chi\right),g\left(x\right)u\right\rangle .
\]
Since by assumption both $V_{x}\left(s,\chi\right)$ and $V_{zx}\left(s,\chi\right)$
exist, and $F^{s}\left(\chi,u\right)$ is differentiable at $\chi$,
this implies the gradient of $\varphi^{s}$ can by found \cite[Theorem 4.13]{bonnans2000perturbation}
with the following relation:
\[
\varphi_{z}^{s}\left(\chi,u\right)=F_{z}^{s}\left(\chi,\pi\left(s,\chi\right)\right).
\]
This gives
\begin{align*}
\dot{V}_{z}\left(s,\chi\right)= & -\frac{\partial}{\partial z}\left\{ \left\langle V_{z}\left(s,\chi\right),\ell\left(x\right)\right\rangle \right\} \\
 & -\frac{\partial}{\partial z}\left\{ \left\langle V_{x}\left(s,\chi\right),\alpha\right\rangle \right\} \biggr\rvert_{\alpha=f\left(x\right)}\\
 & -\frac{\partial}{\partial z}\left\{ \left\langle V_{x}\left(s,\chi\right),\beta u\right\rangle \right\} \biggr\rvert_{u=\pi\left(s,\chi\right),\beta=g\left(x\right)}.
\end{align*}
Noting the symmetry of the gradients with respect to $x,z$ we have
\begin{align*}
\dot{V}_{z}\left(s,\chi\right)= & -\frac{\partial}{\partial z}\left\{ \left\langle V_{z}\left(s,\chi\right),\ell\left(x\right)\right\rangle \right\} \\
 & -\frac{\partial}{\partial x}\left\{ \left\langle V_{z}\left(s,\chi\right),\alpha\right\rangle \right\} \biggr\rvert_{\alpha=f\left(x\right)}\\
 & -\frac{\partial}{\partial x}\left\{ \left\langle V_{z}\left(s,\chi\right),\beta u\right\rangle \right\} \biggr\rvert_{u=\pi\left(s,\chi\right),\beta=g\left(x\right)},
\end{align*}
and then applying Lemma \ref{lem: First gradient Vz}, the result
follows.
\end{proof}

\subsection{Method of Lines with State Space Decomposition}

Recall that we denote by $\phi\left(s,\chi\right)$ the numeric approximation
to the value function, $V\left(s,\chi\right)$. The proposed hybrid
MOL is relies on an approximations $D_{x}\phi\left(s,\chi\right)$
of the gradient of the value function with respect to $x$, $V_{x}\left(s,\chi\right)$,
that is based on the Lax-Friedrichs approximation. However, the approximation
$\Phi\left(s,\chi\right)$ of the gradient of the value function with
respect to $z$, $V_{z}\left(s,\chi\right)$, is obtained by solving
an ODE in time and does not require a spatial grid. In view of this,
this method computes the two approximations $\phi\left(s,x^{k},z\right)$
and $\Phi\left(s,x^{k},z\right)$ on points $\left(x^{k},z\right)\in{\mathcal{S}}$
where the $x^{k}$ are restricted to a finite grid of the $x$-component
of the state, whereas $z$ is not restricted to a grid. To accomplish
this, we need the following assumption that, together with Theorem
1, leads to the following MOL. 

Suppose that the first term in $\left(\ref{eq:1st term in z ode}\right)$
can be written as 
\begin{equation}
\frac{\partial}{\partial z}\left\{ \left\langle G_{z}\left(\bar{\xi}\left(s\right)\right),\ell\left(x\right)\right\rangle \right\} =\Upsilon\left(x,z,G_{z}\left(\bar{\xi}\left(s\right)\right)\right),\label{eq: second gradient as a function of dz}
\end{equation}
and fix $z$ for any $z\in{\mathcal{Z}}$. Denote by $\Phi\left(s,x^{k},z\right)\approx\phi_{z}\left(s,x^{k},z\right)=G_{z}\left(\bar{\xi}\left(s\right)\right)$
as the gradient estimate of the value function with respect to $z$.
Then from Theorem \ref{thm: ODE for grad of z} and Lemma \ref{lem: First gradient Vz},
we construct the following method of lines approach, for $\left(x^{k},z\right)\in{\mathcal{S}}$:
\begin{equation}
\begin{cases}
\dot{\phi}\left(s,x^{k},z\right)=-\tilde{\mathcal{H}}\Big(x^{k},z,D_{x}^{+}\phi\left(s,x^{k},z\right),D_{x}^{-}\phi\left(s,x^{k},z\right),\\
\,\,\,\,\,\,\,\,\,\,\,\,\,\,\,\,\,\,\,\,\,\,\,\,\,\,\,\,\,\,\,\,\,\,\,\,\,\,\,\,\,\,\Phi\left(s,x^{k},z\right)\Big),\\
\dot{\Phi}\left(s,x^{k},z\right)=-\Upsilon\left(x^{k},z,\Phi\left(s,x^{k},z\right)\right)\\
\,\,\,\,\,\,\,\,\,\,\,\,\,\,\,\,\,\,-R_{x}\left(s,x^{k},z,\pi\left(s,x^{k},z\right),f\left(x^{k}\right),g\left(x^{k}\right)\right),\\
\phi\left(0,x^{k},z\right)=G\left(z\right),\\
\Phi\left(0,x^{k},z\right)=G_{z}\left(z\right),
\end{cases}\label{eq:Hybrid MOL}
\end{equation}
where 
\begin{align*}
\tilde{\mathcal{H}}\left(x,z,\rho^{+},\rho^{-},\lambda\right):= & {\mathcal{H}}\left(x,z,\frac{\rho^{+}+\rho^{-}}{2},\lambda\right)\\
 & -\nu\left(x\right)^{\top}\left(\frac{\rho^{+}+\rho^{-}}{2}\right),
\end{align*}
is the Lax-Friedrichs approximation. The Lax-Friedrichs approximation
is only needed in the $x$ dimension since that is the only space
where a grid is constructed for computing finite differences.

\section{Optimal Information Collection}

Recall that the system $\left(\ref{eq:augmented dynamics}\right)$
presented in Section \ref{sec:The-Vehicle-Sensing} is of the form
of Section \ref{sec:Decomposition-of-Coupled}, and we can use Theorem
\ref{thm: ODE for grad of z} to construct a method of lines. Recall
that for Dubins car, $U=\left[-\omega_{\text{max}},\omega_{\text{max}}\right]$,
and the optimal Hamilton $\left(\ref{eq:Optimal Hamiltonian-1}\right)$
becomes
\[
{\mathcal{H}}\left(x,z,p,\lambda\right)=\left\langle f\left(x\right),p\right\rangle +\omega_{\max}\left|B^{\top}p\right|+\left\langle \lambda,\text{vec}\left(Q\left(x\right)\right)\right\rangle ,
\]
and optimal control policy is given by
\begin{align}
\pi\left(s;x,z\right):= & \underset{u\in U}{\arg\,\min}H\left(x,z,u,V_{x}\left(s,x,z\right),V_{z}\left(s,x,z\right)\right)\nonumber \\
 & \in\begin{cases}
-\omega_{\max} & B^{\top}V_{x}\left(s,x,z\right)<0\\
\left[-\omega_{\max},\omega_{\max}\right] & B^{\top}V_{x}\left(s,x,z\right)=0\\
\omega_{\max} & B^{\top}V_{x}\left(s,x,z\right)>0.
\end{cases}\label{eq:optimal policy}
\end{align}
In order to compute the first term in $\left(\ref{eq:1st term in z ode}\right)$
for the vehicle tracking problem presented in Section \ref{sec:The-Vehicle-Sensing},
we present the following lemma.
\begin{lem}
\label{lem:dz trace as a function of gradient}Let $\chi\in{\mathcal{S}}$.
When $G\left(z\right)=-\text{log\,det}\left(\text{vec}^{-1}\left(z\right)\right)$
and $\ell\left(x\right)=\text{vec}\left(Q\left(x\right)\right)$,
then
\begin{align*}
 & \frac{\partial}{\partial z}\left\langle G_{z}\left(\bar{\xi}\left(s\right)\right),\ell\left(x\right)\right\rangle \\
 & =\text{vec}\left(\text{vec}^{-1}\left(G_{z}\left(\bar{\xi}\left(s\right)\right)\right)\cdot Q\left(x\right)\cdot\text{vec}^{-1}\left(G_{z}\left(\bar{\xi}\left(s\right)\right)\right)\right){}^{\top}.
\end{align*}
\end{lem}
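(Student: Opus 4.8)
The plan is to prove the identity by direct matrix calculus, exploiting the explicit form of the \noun{logdet} terminal cost. First I would compute the gradient $G_{z}$ in closed form. Writing $\Xi:=\text{vec}^{-1}\left(\bar{\xi}\left(s\right)\right)$ for the Fisher information matrix along the optimal trajectory, which is symmetric and positive definite and hence invertible, the standard identity $\partial\log\det\left(\Xi\right)/\partial\Xi=\Xi^{-1}$ together with $G\left(z\right)=-\log\det\left(\text{vec}^{-1}\left(z\right)\right)$ yields $G_{z}\left(\bar{\xi}\left(s\right)\right)=-\text{vec}\left(\Xi^{-1}\right)$. Equivalently, $\text{vec}^{-1}\left(G_{z}\left(\bar{\xi}\left(s\right)\right)\right)=-\Xi^{-1}$, which is precisely the matrix that appears on the right-hand side of the claim.

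Next I would expand the inner product and differentiate. Using $\ell\left(x\right)=\text{vec}\left(Q\left(x\right)\right)$, the relation $\left\langle \text{vec}\left(A\right),\text{vec}\left(B\right)\right\rangle =\text{tr}\left(A^{\top}B\right)$, and the symmetry of $\Xi^{-1}$, the inner product becomes
\[
\left\langle G_{z}\left(\bar{\xi}\left(s\right)\right),\ell\left(x\right)\right\rangle =-\text{tr}\left(\Xi^{-1}Q\left(x\right)\right).
\]
Differentiating this with respect to $z$ is then a bookkeeping exercise built on the inverse-derivative identity $d\left(\Xi^{-1}\right)=-\Xi^{-1}\left(d\Xi\right)\Xi^{-1}$. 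The cleanest route is to form the vectorized Hessian $G_{zz}\left(\bar{\xi}\left(s\right)\right)=\Xi^{-1}\otimes\Xi^{-1}$ and contract it with $\ell\left(x\right)$, invoking the Kronecker identity $\left(\Xi^{-1}\otimes\Xi^{-1}\right)\text{vec}\left(Q\left(x\right)\right)=\text{vec}\left(\Xi^{-1}Q\left(x\right)\Xi^{-1}\right)$. Substituting $\Xi^{-1}=-\text{vec}^{-1}\left(G_{z}\left(\bar{\xi}\left(s\right)\right)\right)$ (the two minus signs cancel) converts $\Xi^{-1}Q\left(x\right)\Xi^{-1}$ into $\text{vec}^{-1}\left(G_{z}\left(\bar{\xi}\left(s\right)\right)\right)Q\left(x\right)\text{vec}^{-1}\left(G_{z}\left(\bar{\xi}\left(s\right)\right)\right)$, and transposing to match the row-vector convention for a gradient gives exactly the stated expression.

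The main obstacle is conceptual rather than computational: pinning down the precise sense in which $\partial/\partial z$ acts on $\left\langle G_{z}\left(\bar{\xi}\left(s\right)\right),\ell\left(x\right)\right\rangle$. Since $\ell\left(x\right)$ carries no $z$-dependence and, by Lemma \ref{lem: First gradient Vz}, $G_{z}\left(\bar{\xi}\left(s\right)\right)$ equals $V_{z}\left(s,\chi\right)$, this derivative is really the Hessian $V_{zz}$ contracted with $\ell\left(x\right)$, and the computation above tacitly identifies it with $G_{zz}\left(\bar{\xi}\left(s\right)\right)\ell\left(x\right)$. Justifying that reduction, given that $\bar{\xi}\left(s\right)$ depends on $z$ both directly and through the optimal control, is the step that must be argued carefully; the remaining work is routine manipulation of transposes and vec/Kronecker conversions, resting throughout on the symmetry and positive definiteness of $\Xi$ and $Q\left(x\right)$ to guarantee invertibility and to validate $\partial\log\det\left(\Xi\right)/\partial\Xi=\Xi^{-1}$.
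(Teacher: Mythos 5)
Your proposal is correct and, modulo notation, takes the same route as the paper's proof: closed-form gradient of the \noun{logdet} cost, reduction of the inner product to $-\text{tr}\left(\bar{\Xi}^{-1}Q\left(x\right)\right)$, and the inverse-derivative identity $d\left(\bar{\Xi}^{-1}\right)=-\bar{\Xi}^{-1}\left(d\bar{\Xi}\right)\bar{\Xi}^{-1}$. Where you contract the Hessian via the Kronecker identity $\left(\bar{\Xi}^{-1}\otimes\bar{\Xi}^{-1}\right)\text{vec}\left(Q\left(x\right)\right)=\text{vec}\left(\bar{\Xi}^{-1}Q\left(x\right)\bar{\Xi}^{-1}\right)$, the paper differentiates elementwise against $Z_{ij}$ using the perturbations $S^{ij}=e_{i}e_{j}^{\top}$ and reads off $\left[\bar{\Xi}^{-1}Q\left(x\right)\bar{\Xi}^{-1}\right]_{ji}$; these are the same computation in two notations. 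The one step you flag as needing careful justification --- passing $\partial/\partial z$ through $\bar{\xi}\left(s\right)$ as if $\partial\bar{\xi}/\partial z=I$, despite $\bar{\xi}$ depending on $z$ through the optimal control --- is precisely what the paper supplies as Proposition \ref{prop:gradient in z with intial cond. is I}: the variational (sensitivity) equation of the augmented flow, with the optimizing control held fixed, has $\left(z,z\right)$ Jacobian block constant at the identity because neither the $x$-dynamics nor $\ell$ depends on $z$; the paper invokes this inside the proof of the lemma, so to complete your argument you should cite or reprove that proposition rather than leave the reduction tacit (the dependence of $\bar{u}$ on $z$ is handled there only at the level of fixed control, consistent with the envelope reasoning already used in Theorem \ref{thm: ODE for grad of z}). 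One point in your favor: your sign bookkeeping is cleaner than the printed proof, which writes $G_{z}\left(\bar{\xi}\left(s\right)\right)$ as $\bar{\Xi}^{-1}$ in vectorized form without the minus sign and then drops a second minus at the penultimate line; the two slips cancel, and your observation that the minus signs cancel because $G_{z}$ enters quadratically on the right-hand side is the correct accounting of why the stated identity holds as written.
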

\begin{proof}
Define $\bar{\Xi}\left(z\right):=\Xi\left(s;x,z,\bar{u}\left(\cdot\right)\right)=\text{vec}^{-1}\left(\bar{\xi}\left(s;\chi,\bar{u}\left(\cdot\right)\right)\right)$
as the optimal auxiliary state trajectory at the time, $s$, reshaped
into a matrix. The matrix forms simplifies the following proof and
the computations in the examples to follow. We also denote by $Z:=\text{vec}^{-1}\left(z\right)$.
The gradient with respect to a matrix of a function $F\left(Z\right)$
is the matrix defined by
\[
\frac{\partial}{\partial Z}F\left(Z\right):=\text{vec}^{-1}\left\{ \left[\frac{\partial F\left(Z\right)}{\partial Z_{ij}}\right]_{i,j}\right\} .
\]
Recall $\left(\ref{eq:vec of G}\right)$ and from Lemma \ref{lem: First gradient Vz}
that $V_{z}\left(s,x,z\right)=G_{z}\left(\bar{\Xi}\left(z\right)\right)=\text{vec}^{-1}\left(\bar{\Xi}\left(z\right)^{-1}\right)$.
Then we have
\begin{align*}
 & \frac{\partial}{\partial z}\left\langle G_{z}\left(\bar{\xi}\left(s\right)\right),\ell\left(x\right)\right\rangle \\
 & =\text{vec}\left(\frac{\partial}{\partial Z}\text{tr}\left(\bar{\Xi}\left(z\right)^{-1}Q\left(x\right)\right)\right).
\end{align*}
We direct our attention to the term inside the $\text{vec}$ operator
in the last line above, and find
\begin{align*}
 & \frac{\partial}{\partial Z_{ij}}\text{tr}\left(\bar{\Xi}\left(z\right)^{-1}Q\left(x\right)\right)\\
 & =\text{tr}\left(\frac{\partial}{\partial Z_{ij}}\left\{ \bar{\Xi}\left(z\right)^{-1}\right\} Q\left(x\right)\right)\\
 & =\text{tr}\left(-\bar{\Xi}\left(z\right)^{-1}\frac{\partial\bar{\Xi}\left(z\right)}{\partial Z_{ij}}\bar{\Xi}\left(z\right)^{-1}Q\left(x\right)\right)
\end{align*}
where the last line is from \cite{petersen2008matrix}. Noting $\frac{\partial\bar{\Xi}\left(z\right)}{\partial Z_{ij}}=\frac{\partial\bar{\Xi}\left(z\right)}{\partial Z}\frac{\partial z}{\partial Z_{ij}}$,
recalling from Proposition \ref{prop:gradient in z with intial cond. is I}
that $\frac{\partial\bar{\Xi}\left(z\right)}{\partial Z}=I$ and noting
$\frac{\partial z}{\partial Z_{ij}}=S^{ij}:=e_{i}e_{j}^{\top}$, where
$e_{k}$ is a vector with a $1$ in $k$-th element and zeros elsewhere.
We now have
\begin{align*}
 & \frac{\partial}{\partial Z_{ij}}\text{tr}\left(\bar{\Xi}\left(z\right)^{-1}Q\left(x\right)\right)\\
 & =-\text{tr}\left(\bar{\Xi}\left(z\right)^{-1}e_{i}e_{j}^{\top}\bar{\Xi}\left(z\right)^{-1}Q\left(x\right)\right)\\
 & =-\text{tr}\left(e_{j}^{\top}\bar{\Xi}\left(z\right)^{-1}Q\left(x\right)\bar{\Xi}\left(z\right)^{-1}e_{i}\right)\\
 & =-e_{j}^{\top}\bar{\Xi}\left(z\right)^{-1}Q\left(x\right)\bar{\Xi}\left(z\right)^{-1}e_{i}\\
 & =\left[\bar{\Xi}\left(z\right)^{-1}Q\left(x\right)\bar{\Xi}\left(z\right)^{-1}\right]_{ji},\\
 & =\left[\text{vec}^{-1}\left(G_{z}\left(\bar{\xi}\left(s\right)\right)\right)\cdot Q\left(x\right)\cdot\text{vec}^{-1}\left(G_{z}\left(\bar{\xi}\left(s\right)\right)\right)\right]_{ji}
\end{align*}
and the result follows.
\end{proof}
Note Lemma \ref{lem:dz trace as a function of gradient} gives us
the relation in $\left(\ref{eq: second gradient as a function of dz}\right)$
for the sensing trajectory problem, and when in matrix form as in
the proof, gives a relationship that is simple to compute.

\section{Results}

\begin{figure}
\begin{centering}
\includegraphics[width=8cm]{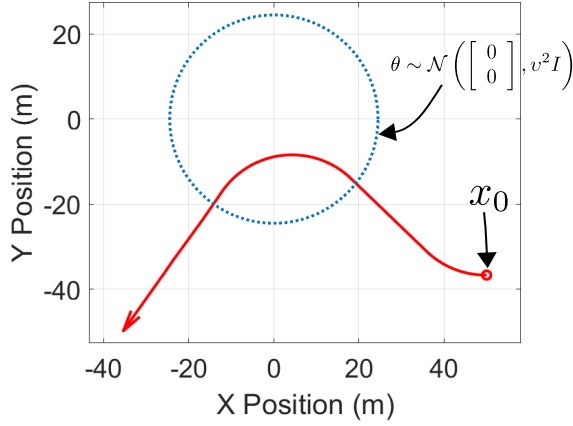}
\par\end{centering}
\caption{An optimal path computed for the first example, shown in red. In this
example, the aircraft is only using Doppler shift measurements. The
blue dashed circle is the 95\% error ellipse of the prior distribution
on $\theta$, which in this example represents the position of the
vehicle target.\label{fig:Single Opt. Path}}
\end{figure}
We consider a passive RF sensor that measures the Doppler frequency
shift in the carrier frequency, denoted as ${\mathcal{F}}$, arising
from the relative motion between transmitting vehicle and the receiver.
Note that we do not need to decode the underlying transmission, as
we are only tracking the carrier frequency. More details about the
derivation of this, as well as other sensor models can be found in
\cite{kirchner2020heterogeneous}.

We assume in this paper the sensor produces conditionally independent
measurements, each with a Gaussian distribution with mean $\mu_{\mathcal{F}}\left(\theta\right)$.
While the mean vector depends on the parameter of interest, $\theta$,
the covariance does not depend\footnote{It is not required that the covariance to be independent of $\theta$,
but it simplifies the example here.} on $\theta$and is given as $\Sigma_{\mathcal{F}}$. This gives a
closed form expression for $\left(\ref{eq:Inner Q(x;theta)}\right)$
for measurement ${\mathcal{F}}$, as
\begin{equation}
Q\left(x;\theta\right)=\left(\frac{\partial\mu_{\mathcal{F}}\left(\theta\right)}{\partial\theta}\right)^{\top}\Sigma_{\mathcal{F}}^{-1}\left(\frac{\partial\mu_{\mathcal{F}}\left(\theta\right)}{\partial\theta}\right),\label{eq:Gaussian FIM}
\end{equation}
where $\frac{\partial\mu_{\mathcal{F}}\left(\theta\right)}{\partial\theta}$
denotes the Jacobian matrix of $\mu_{\mathcal{F}}\left(\theta\right)$
\cite{malago2015information}. To estimate the expectation and find
the expression $\left(\ref{eq:Q(x)}\right)$, we choose a second-order
Taylor series expansion. Let $Q_{ij}\left(x;\theta\right)$ denote
the $i,j$-th element of the $\left(\ref{eq:Gaussian FIM}\right)$,
and $\theta$ is a random variable with mean $\mu_{\theta}$ and covariance
$\Sigma_{\theta}$. Then we approximate the element with a second
order Taylor expansion as
\begin{align*}
Q_{ij}\left(x;\theta\right)\approx & Q_{ij}\left(x;\mu_{\theta}\right)+\nabla Q_{ij}\left(x;\mu_{\theta}\right)^{\top}\left(\theta-\mu_{\theta}\right)\\
 & +\frac{1}{2}\left(\theta-\mu_{\theta}\right)^{\top}{\bf H}_{ij}\left(x;\mu_{\theta}\right)\left(\theta-\mu_{\theta}\right),
\end{align*}
where ${\bf H}_{ij}\left(x;\theta\right)$ is the hessian matrix of
$Q_{ij}\left(x;\theta\right)$ with respect to $\theta$. The expected
value is then found as
\begin{equation}
{\mathbb{E}}_{\theta}\left[Q_{ij}\left(x;\theta\right)\right]\approx Q_{ij}\left(x;\mu_{\theta}\right)+\frac{1}{2}\text{tr}\left(\Sigma_{\theta}{\bf H}_{ij}\left(x;\mu_{\theta}\right)\right).\label{eq:Taylor expectation approx}
\end{equation}
The closed-form gradient $\frac{\partial\mu_{\mathcal{F}}\left(\theta\right)}{\partial\theta}$
in $\left(\ref{eq:Gaussian FIM}\right)$ are found from \cite{kirchner2020heterogeneous},
while the Hessian values were found using the \noun{CasADI }toolbox
\cite{Andersson2019}.

In the example the parameters to be estimated, $\theta$, consist
of the $\left(X,Y\right)\in{\mathbb{R}}^{2}$ position of the target
vehicle. The prior distribution of $\theta$ is given as
\[
\theta\sim{\mathcal{N}}\left(\left[\begin{array}{c}
0\\
0
\end{array}\right],\upsilon^{2}I\right),
\]
where $\upsilon=10m$ is the standard deviation. The sensor measures
the Doppler shifts with noise standard deviation of $\Sigma_{\mathcal{F}}=1$.
The sensing aircraft is flying $1000\,m$ above the ground level where
the target vehicle is located and the turn rate is limited with $\omega_{\text{max}}=0.05$
rad/s.

Figure \ref{fig:Single Opt. Path} shows the optimal path from the
initial condition of $X\left(0\right)=50\,m$, $Y\left(0\right)=-36.6\,m$,
and $\psi\left(0\right)=-\pi$. The initial angle of $-\pi$ implies
the tracking aircraft is moving from right to left initially at $t=0$.
It can be seen in the figure that the optimal path begins with turning
maneuvers before traveling straight along a ray extending outward
from the center of the prior distribution of $\theta$. Conceptually,
travel along this ray will give maximum variation in Doppler shift,
but the early maneuvers are still necessary since multiple directions
of measurements are required to fully localize using only Doppler
measurements

 Figure \ref{fig:Doppler example} shows a series of
optimal paths generated with same initial conditions for $X\left(0\right)$
and $\psi\left(0\right)$, but with a variation in the initial condition,
$Y\left(0\right)$. The vertical initial condition, $Y\left(0\right)$,
were chosen uniformly from a range $\left[-50,50\right]$. While the
trajectories are different quantitatively from that of Figure \ref{fig:Single Opt. Path},
they share the same qualitative properties of an initial maneuver
to gain measurements in various directions before traveling away from
the prior belief, on a ray extending directly from the center. 
\begin{figure}
\begin{centering}
\includegraphics[width=8cm]{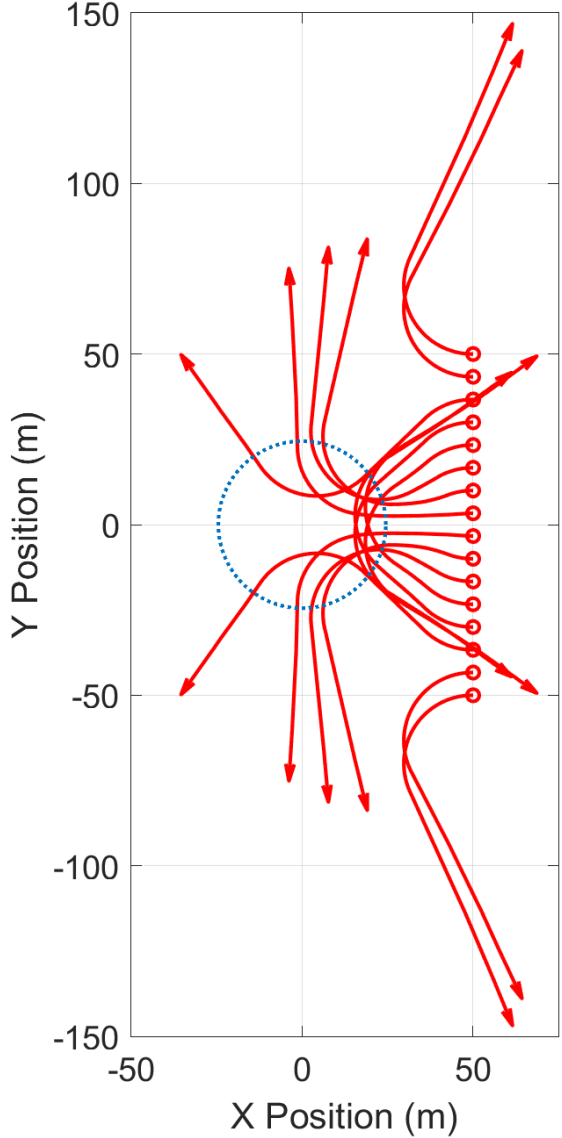}
\par\end{centering}
\caption{Here a series of optimal trajectories are shown in red from different
starting locations, with each vehicle starting out moving from right
to left. Same as in Fig. \ref{fig:Single Opt. Path}, the aircraft
is only using Doppler shift measurements. The blue circle is the 95\%
error ellipse of the prior distribution on $\theta$.\label{fig:Doppler example}}
\end{figure} 

\section{Conclusion}

We present a hybrid method of lines approach for solving a class of
Hamilton\textendash Jacobi PDEs that arise in the optimal placement
of sensors. This method provides for robustness, where needed, in
the $x$ subspace by using a classic grid approach with finite differencing.
It avoids a grid in the $z$ subspace and hence scales well with the
number of $z$ dimensions. We applied this to a trajectory optimization
problem where the goal is to find the trajectory that minimizes the
estimation error from the measurements collected along the calculated
path. Future work includes investigating metrics other than \noun{logdet}
such as the trace of the inverse and studying if the hybrid method
of lines approach can be generalized to a broader class of systems.

\appendices{}              

\section{\label{subsec:Regularity-Assumptions}Hamiltonian Regularity Assumptions}        
Let $n$ be the dimension of the augmented state variable $\chi$,
and denote by $\sigma:=\left(p,\lambda\right)^{\top}$, and with a
slight abuse of notation note that ${\mathcal{H}}\left(s,\chi,\sigma\right)={\mathcal{H}}\left(s,x,z,\sigma\right)={\mathcal{H}}\left(s,x,z,p,\lambda\right)$
and vice versa. We introduce a set of mild regularity assumptions:  
\begin{description}
\item [{(H1)}] The Hamiltonian
\[
\left[0,t\right]\times{\mathcal{X}}\times{\mathcal{Z}}\times{\mathbb{R}}^{n}\ni\left(s,x,z,p,\lambda\right)\mapsto{\mathcal{H}}\left(s,x,z,p,\lambda\right)\in{\mathbb{R}}
\]
is continuous.
\item [{(H2)}] There exists a constant $c>0$ such that for all $\left(s,x,z\right)\in\left[0,t\right]\times{\mathcal{X}}\times{\mathcal{Z}}$
and for all $\sigma',\sigma''\in{\mathbb{R}}^{n}$, the following inequalities
hold
\[
\left|{\mathcal{H}}\left(s,x,z,\sigma'\right)-{\mathcal{H}}\left(s,x,z,\sigma''\right)\right|\leq\kappa_{1}\left(\chi\right)\left\Vert \sigma'-\sigma''\right\Vert ,
\]
and
\[
\left|{\mathcal{H}}\left(s,x,z,{\bf 0}\right)\right|\leq\kappa_{1}\left(\chi\right),
\]
with $\kappa_{1}\left(\chi\right)=c\left(1+\left\Vert \chi\right\Vert \right)$.
\item [{(H3)}] For any compact set $M\subset{\mathbb{R}}^{n}$ there exists
a constant $C\left(M\right)>0$ such that for all $\chi',\chi''\in M$
and for all $\left(s,\sigma\right)\in\left[0,t\right]\times{\mathbb{R}}^{n}$
the inequality holds
\[
\left|{\mathcal{H}}\left(s,\chi',\sigma\right)-{\mathcal{H}}\left(s,\chi'',\sigma\right)\right|\leq\kappa_{2}\left(\sigma\right)\left\Vert \chi'-\chi''\right\Vert ;
\]
with $\kappa_{2}\left(\sigma\right)=C\left(M\right)\left(1+\left\Vert \sigma\right\Vert \right)$.
\item [{(H4)}] The terminal cost function
\[
{\mathbb{R}}^{n}\ni\chi\mapsto G\left(\chi\right)\in{\mathbb{R}},
\]
is continuous.
\end{description}
Next we present an important theorem on the existence and uniqueness
of viscosity solutions of the Hamilton\textendash Jacobi equation.
\begin{thm}[{\cite[Theorem II.8.1, p. 70]{subbotin1995generalized}}]
Let assumptions $\left(H1\right)-\left(H4\right)$ hold. Then there
exists a unique viscosity solution to $\left(\ref{eq:HJB Equation-1}\right)$.
\end{thm}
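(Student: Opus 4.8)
The plan is to reconstruct the standard existence-and-uniqueness argument for viscosity solutions of first-order Hamilton--Jacobi initial-value problems, specialized to the growth structure encoded in (H1)--(H4). Since the statement is quoted from \cite{subbotin1995generalized}, the proof really amounts to verifying that the present Hamiltonian $\mathcal{H}$ meets the hypotheses of that reference's general comparison and existence theorems, so I would treat the two halves separately and indicate where each assumption is used.

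For uniqueness I would establish a comparison principle: any viscosity subsolution $u$ and supersolution $v$ of (\ref{eq:HJB Equation-1}) with $u(0,\cdot)\le v(0,\cdot)$ must satisfy $u\le v$ on $[0,t]\times\mathcal{S}$. The workhorse is the doubling-of-variables technique. Arguing by contradiction from $\sup(u-v)>0$, I would consider the penalized functional
\[
\Phi_{\varepsilon,\delta}(s,\chi,\tau,\eta)=u(s,\chi)-v(\tau,\eta)-\tfrac{1}{2\varepsilon}\big(|s-\tau|^2+\|\chi-\eta\|^2\big)-\delta\big(\langle\chi\rangle+\langle\eta\rangle\big),
\]
where $\langle\chi\rangle=(1+\|\chi\|^2)^{1/2}$. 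The bound $|\mathcal{H}(s,\chi,\mathbf{0})|\le\kappa_1(\chi)$ from (H2) together with the $\delta$-term guarantees the supremum is attained at an interior point $(\hat s,\hat\chi,\hat\tau,\hat\eta)$. Writing the sub- and supersolution inequalities at that point, the penalization forces both gradient slots to coincide at $\sigma=\tfrac{1}{\varepsilon}(\hat\chi-\hat\eta)$, so subtracting leaves a difference $\mathcal{H}(\hat s,\hat\chi,\sigma)-\mathcal{H}(\hat s,\hat\eta,\sigma)$, which (H3) controls by $\kappa_2(\sigma)\|\hat\chi-\hat\eta\|=C(M)(1+\|\sigma\|)\|\hat\chi-\hat\eta\|$. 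The classical estimate $\tfrac{1}{\varepsilon}\|\hat\chi-\hat\eta\|^2\to0$ as $\varepsilon\to0$ then tames the product $\|\sigma\|\,\|\hat\chi-\hat\eta\|\sim\tfrac{1}{\varepsilon}\|\hat\chi-\hat\eta\|^2$, and sending $\delta\to0$ afterwards delivers the contradiction.

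For existence I would run Perron's method on top of the comparison principle. Using the linear-growth bound (H2), I would exhibit explicit barriers of the form $G(z)\pm C(1+\|\chi\|)s$, which are respectively a super- and a subsolution agreeing with the initial datum at $s=0$. Setting $W=\sup\{w:w\text{ a subsolution},\,w\le\overline w\}$, the usual Perron argument shows $W$ is simultaneously a sub- and a supersolution, hence a viscosity solution, and comparison pins down $W(0,\cdot)=G$. Alternatively, and more in keeping with the control-theoretic origin of (\ref{eq:HJB Equation-1}), existence could be obtained by taking $W=V$, the value function, and verifying the dynamic programming principle, which forces $V$ to be a viscosity solution.

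The main obstacle, and the reason (H1)--(H4) are phrased with the weights $\kappa_1(\chi)=c(1+\|\chi\|)$ and $\kappa_2(\sigma)=C(M)(1+\|\sigma\|)$ rather than with uniform constants, is the \emph{unbounded domain} $\mathcal{S}={\mathbb R}^n$. On ${\mathbb R}^n$ the maximum in the doubling argument need not exist, and the state-Lipschitz constant multiplies $(1+\|\sigma\|)$, so the linear growth of $\sigma\sim\tfrac{1}{\varepsilon}(\hat\chi-\hat\eta)$ could a priori defeat the penalization. The delicate step is therefore the simultaneous choice of the growth penalization $\delta$ and the doubling parameter $\varepsilon$ so that the localized maximum exists and the $(1+\|\sigma\|)$ factor is absorbed; this is precisely the Crandall--Lions balancing that the cited theorem carries out.
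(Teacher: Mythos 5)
You should note at the outset that the paper contains no proof of this theorem: it is imported verbatim from \cite{subbotin1995generalized}, so the comparison is really against that source's argument, and your route is genuinely different from it. Subbotin's Theorem II.8.1 is proved in the framework of \emph{minimax} solutions: upper and lower solutions are characterized by weak invariance of the epigraph and hypograph with respect to characteristic differential inclusions, built from a representation of ${\mathcal{H}}$ as the Isaacs Hamiltonian of an auxiliary differential game (this is where the Lipschitz-type bounds (H2)--(H3) enter); existence follows from compactness properties of those inclusions, uniqueness from comparison of upper and lower minimax solutions, and the identification with viscosity solutions is a separate equivalence theorem. You instead run the classical Crandall--Lions--Ishii program: doubling of variables with a growth penalization for uniqueness, and Perron's method (or the value function plus dynamic programming) for existence. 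Both programs prove the statement; yours is the more familiar PDE route, while Subbotin's buys a constructive, game-theoretic description of the solution that meshes with the control interpretation the paper relies on elsewhere.

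Two steps of your sketch need repair before it is a proof. First, with $\kappa_{1}\left(\chi\right)=c\left(1+\left\Vert \chi\right\Vert \right)$ the \emph{static} penalization $\delta\left(\langle\chi\rangle+\langle\eta\rangle\right)$ does not close the comparison argument: subtracting the sub- and supersolution inequalities picks up the gradient of the penalization multiplied by $\kappa_{1}$ at the maximum point, i.e.\ an error of order $\delta\left(1+\left\Vert \hat{\chi}\right\Vert \right)$, and the maximum-point estimate only yields $\delta\langle\hat{\chi}\rangle\leq C$, so this error stays bounded but need not vanish as $\delta\rightarrow0$. The standard fix is a time-amplified penalization $\delta e^{Ks}\langle\chi\rangle$ with $K>2c$, whose $s$-derivative dominates the Hamiltonian error (equivalently, a Gronwall argument); you gesture at the ``Crandall--Lions balancing,'' but the specific functional $\Phi_{\varepsilon,\delta}$ you wrote down would fail as stated. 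Second, your barriers $G\left(z\right)\pm C\left(1+\left\Vert \chi\right\Vert \right)s$ implicitly assume $G$ is smooth, or at least Lipschitz, whereas (H4) grants only continuity; Perron's method here needs a family of cone-shaped barriers at each point, or mollified initial data, to pin down $W\left(0,\cdot\right)=G$. Your fallback of taking $W=V$ and verifying the dynamic programming principle is the cleanest path and is available for the paper's Bellman-form Hamiltonian, but observe that (H1)--(H4) do not by themselves present ${\mathcal{H}}$ in Bellman form; for a general Hamiltonian satisfying these assumptions one needs precisely the Isaacs-type representation that underlies Subbotin's minimax construction.
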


\section{Supporting Propositions}    
\begin{prop}
\label{prop:gradient in z with intial cond. is I}Let $\chi\in\mathcal{S}$,
then
\[
\frac{\partial}{\partial z}\xi\left(t;\chi,\bar{u}\left(\cdot\right)\right)=I.
\]
\end{prop}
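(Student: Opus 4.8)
The plan is to exploit the cascade structure of the dynamics in $\left(\ref{eq: Coupled Gen. System}\right)$. The key observation is that the $x$-subsystem $\dot{x} = f\left(x\right) + g\left(x\right) u$ evolves autonomously: its right-hand side depends only on $x$ and the control $u$, with no feedback from the auxiliary state $z$. Consequently, once the control function $\bar{u}\left(\cdot\right)$ is held fixed, the trajectory $\gamma\left(\tau; x, \bar{u}\left(\cdot\right)\right)$ defined by $\left(\ref{eq:dynamic constraints}\right)$ is determined entirely by the initial position $x$ and is independent of the initial auxiliary state $z$.

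With this decoupling established, I would start from the integral representation $\left(\ref{eq:running cost as trajectory-1}\right)$, which under the optimal control and with $\chi = \left(x, z\right)^{\top}$ reads
\[
\xi\left(t; \chi, \bar{u}\left(\cdot\right)\right) = z + \int_0^t \ell\left(\gamma\left(\tau; x, \bar{u}\left(\cdot\right)\right)\right) d\tau.
\]
Differentiating both sides with respect to $z$ splits the right-hand side into two pieces: the derivative of the explicit leading term, which is the $m \times m$ identity $I$, and the derivative of the integral. Because the integrand $\ell\left(\gamma\left(\tau; x, \bar{u}\left(\cdot\right)\right)\right)$ carries no $z$-dependence whatsoever --- neither through $\gamma$, by the decoupling above, nor through $\ell$ --- this second piece vanishes, and we obtain $\frac{\partial}{\partial z}\xi\left(t; \chi, \bar{u}\left(\cdot\right)\right) = I$.

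The step deserving the most care is conceptual rather than computational: the control must be treated as a fixed function $\bar{u}\left(\cdot\right)$ while $z$ is perturbed, rather than re-optimized for each perturbed initial condition. This is exactly the convention under which the proposition is later invoked inside the proof of Lemma $\ref{lem:dz trace as a function of gradient}$, where $\frac{\partial \bar{\Xi}\left(z\right)}{\partial Z} = I$ is used with the optimal control frozen. Under this reading the argument is immediate, and the interchange of the $z$-derivative with the time integral is trivially justified since the integrand does not depend on $z$ at all; no regularity beyond existence and uniqueness of $\gamma$ from $\left(\ref{eq:dynamic constraints}\right)$ is needed.
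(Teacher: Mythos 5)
Your proof is correct, but it follows a genuinely different and more elementary route than the paper's. The paper invokes the variational (sensitivity) equation for the Jacobian of the flow with respect to the initial condition, citing \cite[Chapter 5, Equation 3.23]{yong1999stochastic}: it defines the block-partitioned Jacobian $m\left(s\right)=\frac{\partial}{\partial\chi}\zeta\left(s;\chi,\bar{u}\left(\cdot\right)\right)$ satisfying $\dot{m}\left(s\right)=\hat{f}_{\chi}\left(\bar{\zeta}\left(s\right),\bar{u}\left(s\right)\right)m\left(s\right)$ with $m\left(0\right)=I$, extracts the $m_{zz}$ block, and concludes $\dot{m}_{zz}=0$ because $\ell$ does not depend on $z$, so $m_{zz}\left(s\right)\equiv I$. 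You instead differentiate the explicit integral representation $\left(\ref{eq:running cost as trajectory-1}\right)$ directly, using the cascade structure of $\left(\ref{eq: Coupled Gen. System}\right)$ to note that $\gamma\left(\tau;x,\bar{u}\left(\cdot\right)\right)$ carries no $z$-dependence, so the $z$-derivative of the integral term vanishes and only the identity from the leading $z$ survives. Your route requires no machinery beyond existence of $\gamma$, and it makes fully explicit a step the paper's proof glosses over: strictly, the $zz$-block of the variational equation is $\dot{m}_{zz}=\ell_{x}m_{xz}+\ell_{z}m_{zz}$, and the cross term disappears only because $m_{xz}\equiv 0$ --- which is precisely the decoupling you isolate as your ``key observation.'' What the paper's approach buys in exchange is generality: the variational ODE yields all Jacobian blocks at once and would still apply if $\ell$ or the $x$-dynamics depended on $z$, whereas your direct differentiation is tied to the cascade form. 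You are also right to emphasize that $\bar{u}\left(\cdot\right)$ must be frozen rather than re-optimized under the perturbation of $z$; the paper's proof adopts the same convention implicitly, and it is the convention under which the proposition is used in Lemma \ref{lem:dz trace as a function of gradient}.
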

\begin{proof}
By assumption, the terminal point of the state trajectory $\zeta\left(t;\chi,\bar{u}\left(\cdot\right)\right)$
is differentiable with respect to initial condition $\chi\in\mathcal{S}$.
Defining the Jacobin, for $s\in\left[0,t\right]$,
\begin{align*}
 & m\left(s\right):=\left[\begin{array}{cc}
m_{xx}\left(s\right) & m_{x,z}\left(s\right)\\
m_{zx}\left(s\right) & m_{zz}\left(s\right)
\end{array}\right]\\
 & =\left[\begin{array}{cc}
\bar{\gamma}_{x}\left(s\right) & \bar{\gamma}_{z}\left(s\right)\\
\bar{\xi}_{x}\left(s\right) & \bar{\xi}_{z}\left(s\right)
\end{array}\right]=\frac{\partial}{\partial\chi}\zeta\left(s;\chi,\bar{u}\left(\cdot\right)\right).
\end{align*}
We have from \cite[Chapter 5, Equation 3.23]{yong1999stochastic}
that $m\left(t\right)$ satisfies the following matrix equation almost
everywhere:
\[
\begin{cases}
\dot{m}\left(s\right)=\hat{f}_{\chi}\left(\bar{\zeta}\left(s;\chi,\bar{u}\left(\cdot\right)\right),\bar{u}\left(s\right)\right)m\left(s\right), & s\in\left[0,t\right],\\
m\left(0\right)=I.
\end{cases}
\]
From which the $m_{zz}$ partition is written as
\[
\begin{cases}
\dot{m}_{zz}\left(s\right)=\ell_{z}\left(\bar{\gamma}\left(s;\chi,\bar{u}\left(\cdot\right)\right)\right)m_{zz}\left(s\right), & s\in\left[0,t\right],\\
m_{zz}\left(0\right)=I.
\end{cases}
\]
Since $\ell$ does not depend on $z$, we have
\[
\dot{m}_{zz}\left(s\right)=0,\,\forall s\in\left[0,t\right],
\]
and the result follows.
\end{proof}

\acknowledgments
The authors would like to thank Levon Nurbekyan, with the Department
of Mathematics at UCLA, for providing a reference that assisted in
the proof of Theorem \ref{thm: ODE for grad of z}. This research was funded by the Office of Naval Research under Grant N00014-20-1-2093.

\bibliographystyle{IEEEtran}
\bibliography{Kirchner2023}

\thebiography
\begin{biographywithpic}
{Matthew R. Kirchner}{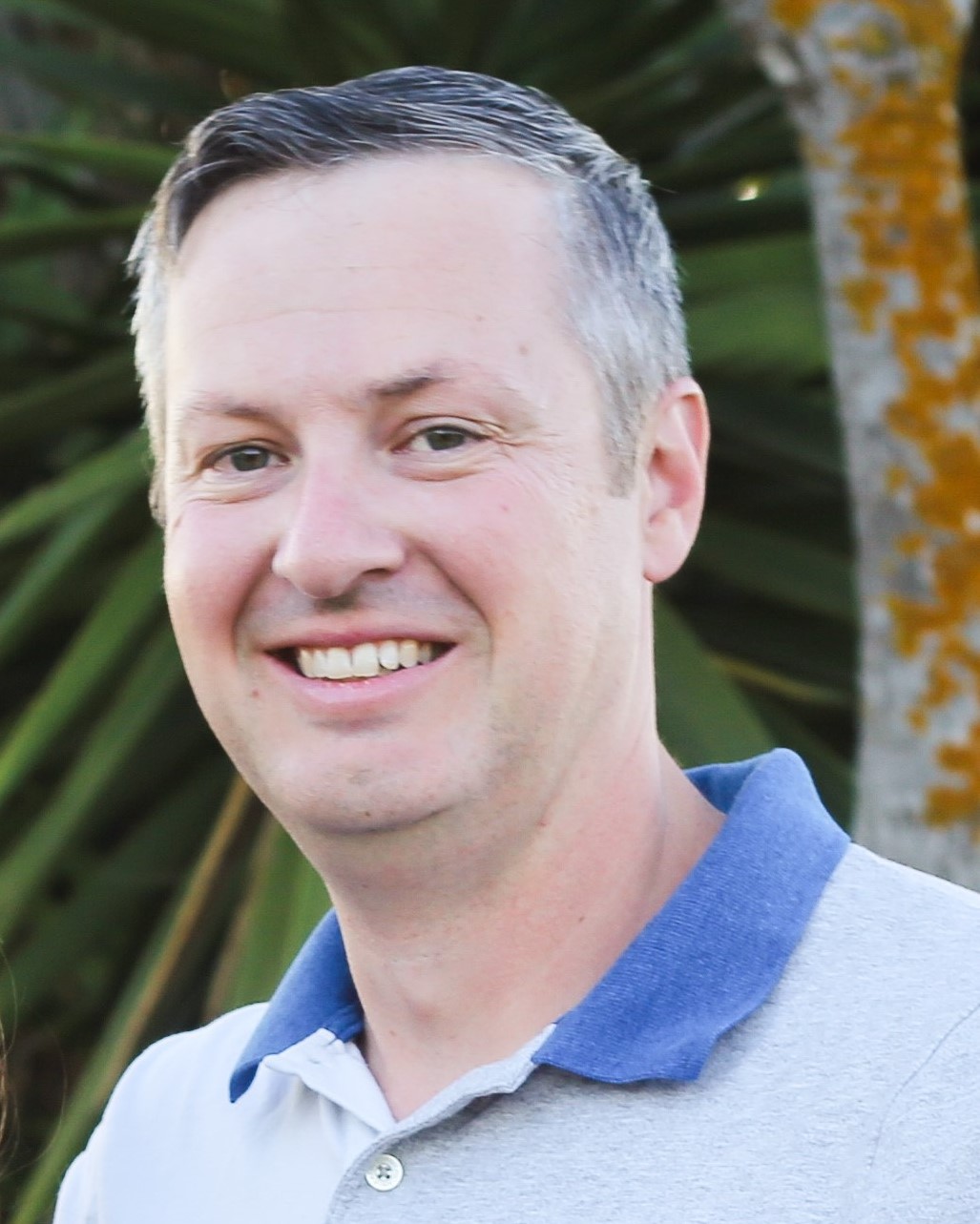}
received his B.S. in Mechanical Engineering from Washington State University in 2007 and his M.S. in Electrical Engineering from the University of Colorado at Boulder in 2013. In 2007 he joined the Naval Air Warfare Center Weapons Division in the Navigation and Weapons Concepts Develop Branch and in 2012 transferred into the Physics and Computational Sciences Division in the Research and Intelligence Department, Code D5J1000. He is currently a Ph.D. candidate studying Electrical Engineering at the University of California, Santa Barbara. His research interests include level set methods for optimal control, differential games, and reachability; multi-vehicle robotics; nonparametric signal and image processing; and navigation and flight control. He was the recipient of a Naval Air Warfare Center Weapons Division Graduate Academic Fellowship from 2010 to 2012; in 2011 was named a Paul Harris Fellow by Rotary International and in 2021 was awarded a Robertson Fellowship from the University of California in recognition of an outstanding academic record. Matthew is a student member of the IEEE.
\end{biographywithpic} 

\begin{biographywithpic}
{David Grimsman}{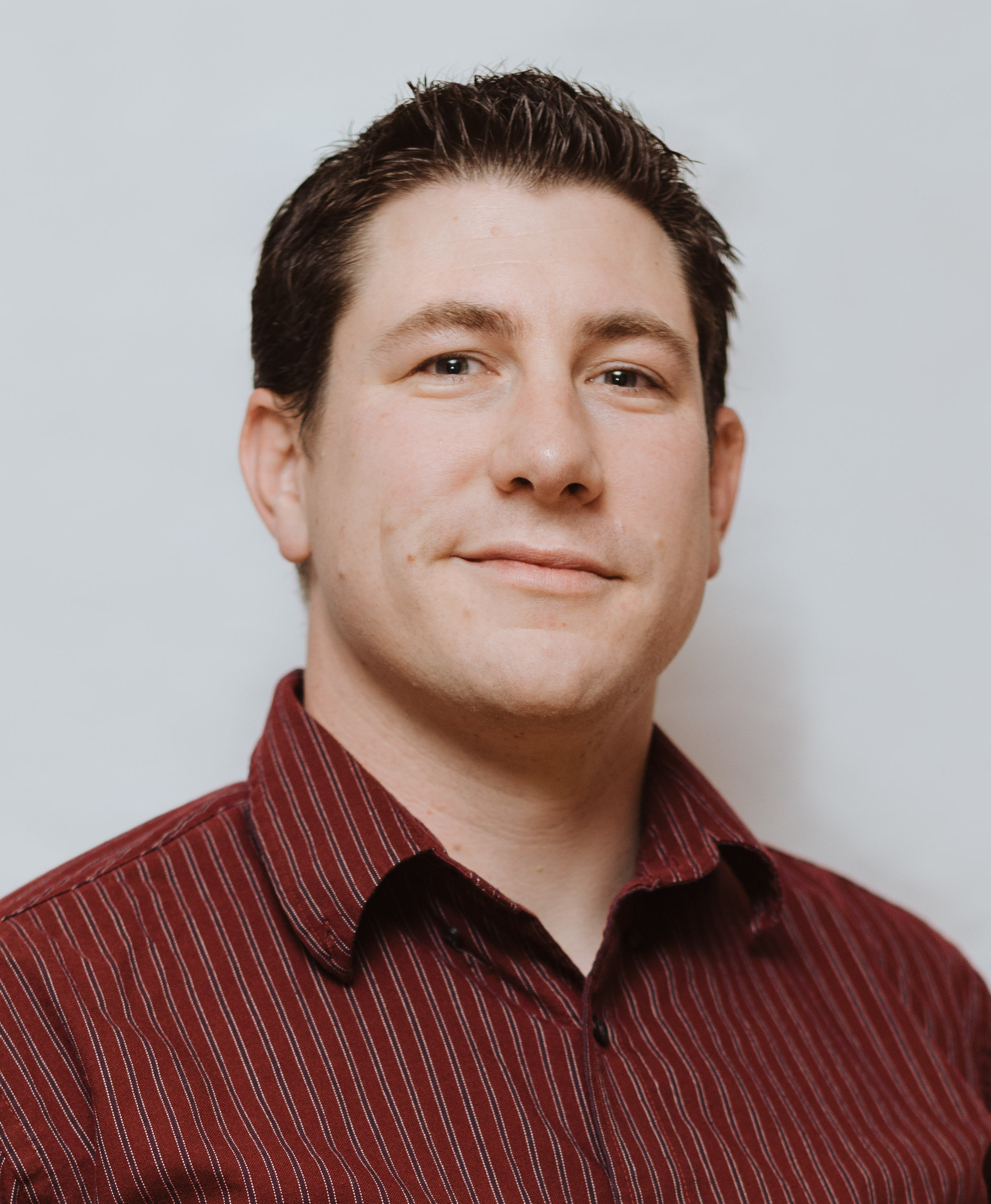}
is an Assistant Professor in
the Computer Science Department at Brigham
Young University. He completed BS in Electrical
and Computer Engineering at Brigham Young
University in 2006 as a Heritage Scholar, and
with a focus on signals and systems. After working for BrainStorm, Inc. for several years as a
trainer and IT manager, he returned to Brigham
Young University and earned an MS in Computer
Science in 2016. He then received his PhD in
Electrical and Computer Engineering from UC
Santa Barbara in 2021. His research interests include mulit-agent
systems, game theory, distributed optimization, network science, linear
systems theory, and security of cyberphysical systems.
\end{biographywithpic}

\begin{biographywithpic}
{Jo{\~a}o P. Hespanha}{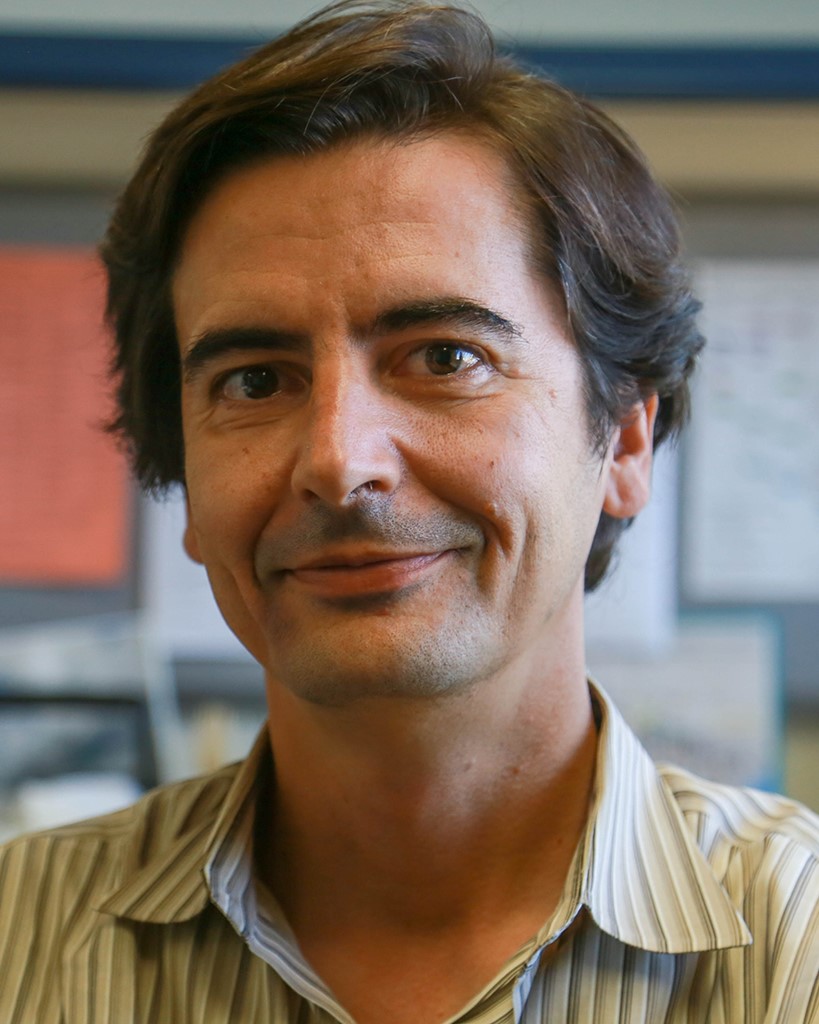}
received his Ph.D. degree in electrical engineering and applied science from Yale University, New Haven, Connecticut in 1998. From 1999 to 2001, he was Assistant Professor at the University of Southern California, Los Angeles. He moved to the University of California, Santa Barbara in 2002, where he currently holds a Professor position with the Department of Electrical and Computer Engineering. Dr. Hespanha is the recipient of the Yale University's Henry Prentiss Becton Graduate Prize for exceptional achievement in research in Engineering and Applied Science, the 2005 Automatica Theory/Methodology best paper prize, the 2006 George S. Axelby Outstanding Paper Award, and the 2009 Ruberti Young Researcher Prize. Dr. Hespanha is a Fellow of the IEEE and he was an IEEE distinguished lecturer from 2007 to 2013. His current research interests include hybrid and switched systems; multi-agent control systems; distributed control over communication networks (also known as networked control systems); the use of vision in feedback control; stochastic modeling in biology; and network security.

\end{biographywithpic}

\begin{biographywithpic}
{Jason R. Marden}{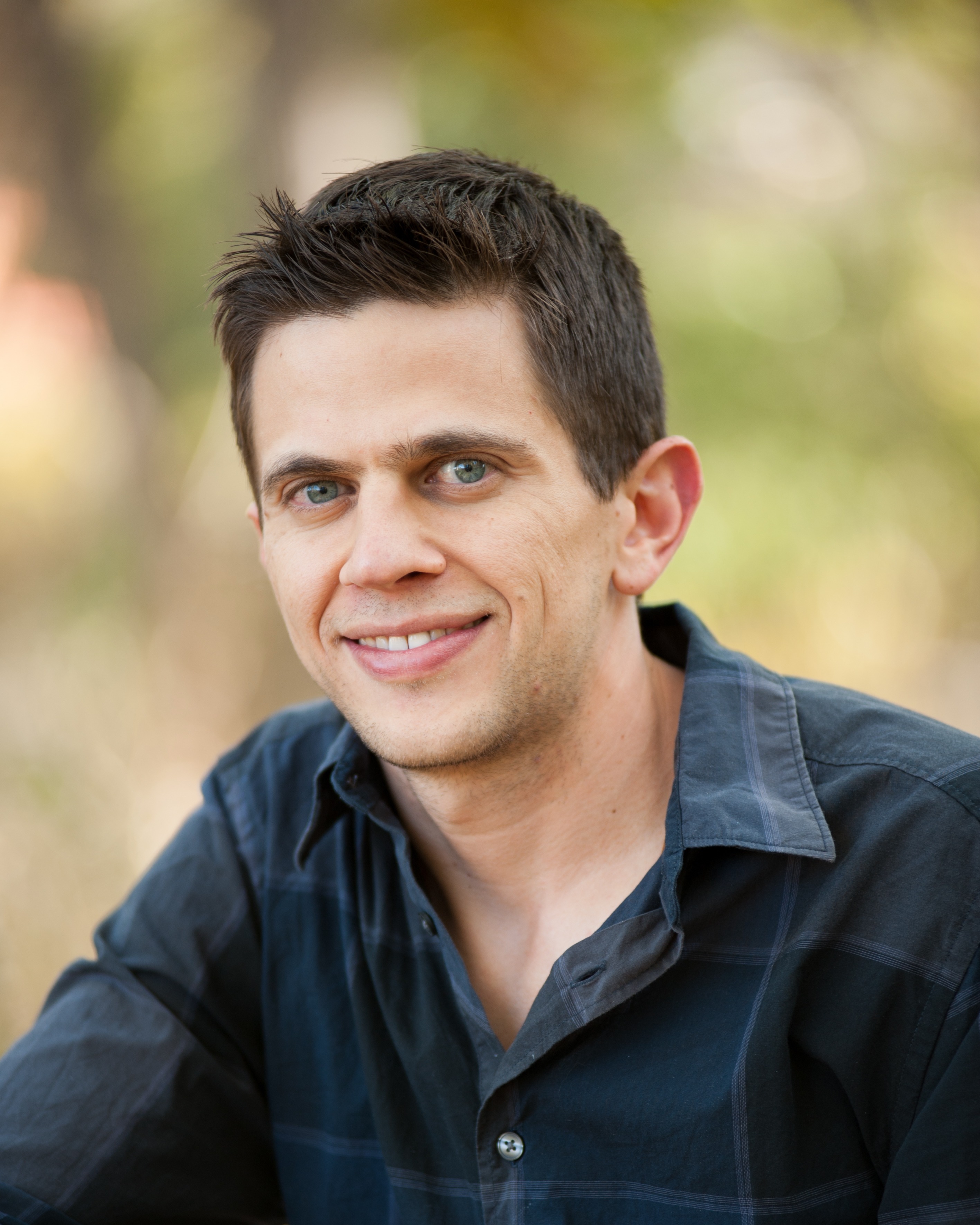}
is a professor in the Department of Electrical and Computer Engineering at the
University of California, Santa Barbara. He received the B.S. degree in 2001 and the Ph.D. degree in 2007 (under the supervision of Jeff S. Shamma), both in mechanical engineering from the University of California, Los Angeles, where he was awarded the
Outstanding Graduating Ph.D. Student in Mechanical Engineering. After graduating, he was
a junior fellow in the Social and Information Sciences Laboratory at the California Institute of Technology until 2010
and then an assistant professor at the University of Colorado until 2015.
He is a recipient of an ONR Young Investigator Award (2015), an NSF
Career Award (2014), the AFOSR Young Investigator Award (2012), the
SIAM CST Best Sicon Paper Award (2015), and the American Automatic
Control Council Donald P. Eckman Award (2012). His research interests
focus on game-theoretic methods for the control of distributed multiagent
systems.

\end{biographywithpic}
\balance

\end{document}